\renewcommand{\dim}{\text{dim}\:} 
\newtheorem{definition}{Definition}
\newtheorem{proposition}{Proposition}
\newtheorem{lemma}{Lemma}
\newtheorem{thm}{Theorem}
\newtheorem{cor}{Corollary}
\newtheorem{remark}{Remark}
\begin{document}
%
\title{Entropy functions and determinant inequalities}

\author{Terence Chan, Dongning Guo, and Raymond Yeung}

\markboth{Journal of \LaTeX\ Class Files,~Vol.~6, No.~1, January~2007}%
{Shell \MakeLowercase{\textit{et al.}}: Bare Demo of IEEEtran.cls for Journals}
 
\maketitle
 
\IEEEpeerreviewmaketitle

\def\rgs{{\Upsilon_{s,n}}}
\def\rgv{{\Upsilon_{v,n}}}

\def\rd{{\Gamma}^{*}}
\def\rs{{\gamma_{s,n}^{*}}}
\def\rv{{\gamma^{*}_{v,n}}}
\def\X{{\set{X}}}
\def\cov{{\text{Cov}}}
\newcommand{\ed}[1]{{\Phi}^{#1}}

\renewcommand{\bar}[1]{ \overline{#1} }
\newcommand{\seq}[2]{{[#1 , \ldots, #2]}}

\newcommand{\nle}[1]{\stackrel{#1}{\le}}
\newcommand{\nge}[1]{\stackrel{#1}{\ge}}
\newcommand{\nequal}[1]{\stackrel{#1}{=}}
\def\con{\overline{\text{con}}}

\def\N{{\set{N}}}

\def\d{{\it d}}
\def\s{{\it s}}
\def\v{{\it v}}

\begin{abstract}
In this paper, we show that the characterisation of all determinant inequalities for 
$n \times n$ positive definite matrices is equivalent to determining the smallest closed and convex cone containing all entropy functions induced by $n$ scalar Gaussian random variables. We have obtained inner and outer bounds on the cone by using representable functions and entropic functions.  In particular, these bounds are tight and explicit for $n \le 3$, implying that determinant inequalities 
for $3 \times 3$ positive definite matrices are completely characterized by Shannon-type information inequalities.
\end{abstract}
\begin{keywords}
Entropy, Gaussian distribution, rank functions
\end{keywords}

\section{Introduction}
 
Let $n$ be a positive integer and denote the ground set by $\N=\{1,...,n\}$ throughout this paper. Suppose  $K$ is an $n\times n$ positive definite matrix. For any  subset $\alpha\subseteq \set{N}$, let $K_{\alpha}$ be the sub-matrix of $K$ obtained by removing those rows and columns of $K$ indexed by  $\set{N} \setminus \alpha$ and its determinant be denoted by $|K_{\alpha}|$. 
Note that when $\alpha$ is the empty set, we will simply define $K_{\alpha}$ as the scalar of value 1.
There are many determinant inequalities in the existing literature that involve only
the principle minors of the matrix.  These include  
\begin{enumerate}
\item  Hadamard inequality
\begin{align}
|K| \le \prod_{i=1}^{n} |K_{i}|
\end{align}

\item   Szasz inequality 

\begin{align}
\left( \prod_{\beta \subseteq \N: |\beta| = l} |K_{\beta}| \right)^{\frac{1}{{k-1
\choose l-1} }} 
\ge 
\left(
\prod_{\beta\subseteq \N: |\beta| = l+1} |K_{\beta}|
\right)^{  \frac{1}{{k-1 \choose l}} }
 \end{align}
for any $1\le l <k$.
 \end{enumerate}
 
As pointed out in \cite{CoverBook,Cover88} and to be illustrated in Section \ref{sec:framework},  many of such determinant inequalities (including the above two inequalities) can be proved via an information-theoretic approach.  Despite that  many determinant inequalities can be found in this approach, a complete characterisation of all determinant inequalities is still missing. In this paper, we aim to understand 
 determinant inequalities by using the information inequality framework proposed in \cite{R.W.Yeung1997A-framework}. 

\section{Information inequality framework}\label{sec:framework}

The framework proposed in \cite{R.W.Yeung1997A-framework} provides a geometric approach to understanding  information inequalities.\footnote{See \cite[Ch.~13-16]{Yeung08} for a comprehensive treatment.} Its idea will be illustrated shortly. 

\begin{definition}[Rank functions]
 A \emph{rank function} over the ground set $\set{N}$ is a real-valued function defined on all subsets of $\set{N}$. 
 The \emph{rank function space} over the ground set $\set N$, denoted by ${\mathbb R}^{2^{n}}$, is the set of all rank functions over $\set{N}$. 
\end{definition}

As usual,  ${\mathbb R}^{2^{n}}$ will be treated as a $2^{n}$-dimensional Euclidean space, so that  concepts such as metric and limits can be defined accordingly.

\begin{definition}[Entropic functions]
Let $g$ be a rank function over $\set{N}$. Then $g$ is called 
\emph{entropic} if there exists  
a set of discrete random variables $\{ X_{i}, i\in \set{N}\}$ such that 
$g(\alpha)$ is   the Shannon entropy\footnote{All logarithms used in the paper is in the base 2.}  $H(X_i, i\in {\alpha} )$, or  $H(X_{\alpha} )$ for short, for all $ \alpha\subseteq \set{N}$. 

On the other hand, if  $\{ X_{i}, i\in \set{N}\}$ is a set of continuous scalar random variables such that $g(\alpha)$ is the differential entropy $h(X_\alpha)$ for all $ \alpha\subseteq \set{N}$, then  $g$ is called \emph{s-entropic}.
%
%
%
\end{definition}


\begin{definition}[Entropic regions]
Consider any nonempty finite ground set $\N$. Define the following ``entropy regions'':
\begin{align}
\rd_{n}  &= \{g \in {\mathbb R}^{2^n} : g \text{ is  entropic}\} \\
\rs  &=  \{g \in {\mathbb R}^{2^n} : g \text{ is  \s-entropic}\}.
\end{align} 
 \end{definition}

Understanding the above entropic regions is one of the most fundamental problems in information theory. It is equivalent to determining the set of all information inequalities~\cite{R.W.Yeung1997A-framework}.


In this paper, we will use the following notation. For any subset $\set{S} \subseteq {\mathbb R}^{2^n}$, $\mathbb{W}(\set{S})$ is defined as the set of all
rank functions $g^{*}$ such that $g^{*}=c \cdot g$ for some  $c>0$ and $g\in \set{S}$.
The closure of $\mathbb{W}(\set{S})$ will be denoted by $\bar{\mathbb{W}}(\set{S})$.
 Finally, the smallest closed and convex cone containing $\set{S}$ will be denoted by $\con(\set{S})$. Clearly, 
\begin{align}
\set{S} \subseteq \mathbb{W}(\set{S}) \subseteq  \bar{\mathbb{W}}(\set{S}) \subseteq \con(\set{S}).
\end{align}

\begin{thm}[Geometric framework~\cite{R.W.Yeung1997A-framework}] \label{thm:yeungframework}
A linear information inequality 
\[
\sum_{\alpha \subseteq \N} c_{\alpha} H(X_\alpha) \ge 0
\]
is valid for all  discrete random variables $\{X_{1},\ldots, X_{n}\}$ 
 if and only if for all $g\in\rd_n$
\[
\sum_{\alpha \subseteq \N} c_{\alpha} g(\alpha) \ge 0.
\]
\end{thm}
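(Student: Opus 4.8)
The plan is to derive both implications directly from the definition of the entropic region $\rd_n$, since the equivalence is essentially an unfolding of what it means for a rank function to be entropic. The crucial observation is that a rank function $g$ lies in $\rd_n$ precisely when there exists a tuple of discrete random variables $\{X_i : i \in \N\}$ with $g(\alpha) = H(X_\alpha)$ for every $\alpha \subseteq \N$. This sets up a two-way correspondence between points of $\rd_n$ and the joint-entropy profiles of $n$-tuples of discrete random variables, and the whole argument rests on exploiting this correspondence in each direction.

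For the forward implication, I would assume that the inequality $\sum_{\alpha} c_{\alpha} H(X_\alpha) \ge 0$ holds for every tuple of discrete random variables and then fix an arbitrary $g \in \rd_n$. By the definition of entropic, there is a realizing tuple $\{X_i\}$ with $g(\alpha) = H(X_\alpha)$ for all $\alpha$, so substituting gives $\sum_{\alpha} c_{\alpha} g(\alpha) = \sum_{\alpha} c_{\alpha} H(X_\alpha) \ge 0$, which is exactly what is required.

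For the reverse implication, I would start from an arbitrary tuple $\{X_1, \ldots, X_n\}$ of discrete random variables, define the rank function $g(\alpha) := H(X_\alpha)$, and observe that $g \in \rd_n$ by construction. Applying the hypothesis to this particular $g$ then yields $\sum_{\alpha} c_{\alpha} H(X_\alpha) = \sum_{\alpha} c_{\alpha} g(\alpha) \ge 0$, establishing the inequality for the chosen tuple and hence for all tuples.

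The ``hard part'' here is conceptual rather than technical: there is essentially no obstacle to overcome. Because $\rd_n$ is defined as the set of \emph{exactly} achievable entropy vectors rather than as its closure, no limiting or density argument enters, and the two universal quantifications match term for term. The only point requiring genuine care is to confirm that the definition of entropic is symmetric in both directions---every entropic $g$ arises from some tuple, and every tuple produces an entropic $g$---which is immediate from the definition of $\rd_n$ given above.
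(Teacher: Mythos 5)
Your proof is correct, but there is nothing in the paper to compare it against: the paper states this theorem as a citation to \cite{R.W.Yeung1997A-framework} and provides no proof of its own. Your diagnosis is also the right one. As formulated here --- with the entropic region $\Gamma^*_n$ itself rather than its closure --- the equivalence is a definitional unfolding, and your two directions (realize an arbitrary $g \in \Gamma^*_n$ by a tuple of discrete random variables and substitute; conversely, note that the entropy function of any given tuple lies in $\Gamma^*_n$ and apply the hypothesis to it) are exactly the required correspondence, with no limiting or density argument needed. The one caveat worth recording is that the substance of Yeung's geometric framework lies one step beyond the statement as quoted: because the constraint $\sum_{\alpha} c_{\alpha} g(\alpha) \ge 0$ is linear and continuous in $g$, its validity on $\Gamma^*_n$ is equivalent to its validity on the closure $\overline{\Gamma}^*_n$, and indeed on the smallest closed convex cone containing $\Gamma^*_n$; it is this strengthened form that makes the framework geometrically useful, and the paper relies on precisely this ``linear inequality holds on a set iff it holds on the generated closed convex cone'' step in its proof of Theorem 2 for log-determinant functions. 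Your argument, being purely definitional, neither needs nor delivers that extension, which is acceptable for the statement as given but should not be mistaken for the full content of the cited framework.
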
 

By Theorem \ref{thm:yeungframework}, characterising the set of all valid information inequalities is thus equivalent to characterising 
the set $\rd_n$. 
Similar results can be obtained for the set $\rs $. 
In the following, we will extend this geometric framework to study 
determinant inequalities.

\def\detfn{{\Psi_{n}}}

\begin{definition}[Log-determinant function]
A rank function $g$ over $\set{N}$ is called \emph{log-determinant} if there exists  an $n\times n$ positive definite matrix $K$ such that 
\begin{align}
g(\alpha) = \log |K_{\alpha}|
\end{align}
for all $ \alpha \subseteq \N$. 
\end{definition}

Let $\detfn  $ be the set of all log-determinant functions over $\set{N}$. 
Then, we have the following theorem.

\begin{thm}\label{thm:2}
Let $\{c_{\alpha}, \alpha \subseteq \N \}$ be any real numbers. The determinant inequality
\begin{align}
\prod_{\alpha\subseteq \N} |K_{\alpha}|^{c_{\alpha}} \ge 1 \label{eq:thm2}
\end{align}
holds for all positive definite matrix $K$ if and only if 
\begin{align}
\sum_{\alpha\subseteq \N} c_{\alpha } g(\alpha)  \ge 0
\end{align}
for all $g \in  \con(\detfn) $.
\end{thm}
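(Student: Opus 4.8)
The plan is to split the equivalence into two elementary reductions. First I would take logarithms to turn the multiplicative determinant inequality into an additive (i.e.\ linear) inequality in the log-determinant values; this converts the statement ``\eqref{eq:thm2} holds for all positive definite $K$'' into ``$\sum_{\alpha \subseteq \N} c_{\alpha} g(\alpha) \ge 0$ for every $g \in \detfn$.'' Second, I would upgrade the quantifier from $\detfn$ to its smallest closed convex conic hull $\con(\detfn)$, using the fact that a homogeneous linear inequality defines a closed convex cone. Chaining the two equivalences then yields the theorem.

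For the first step, recall that every positive definite $K$ satisfies $|K_{\alpha}| > 0$ for each $\alpha$, so the assignment $g(\alpha) = \log |K_{\alpha}|$ is well defined and yields a function $g \in \detfn$ by definition, with $g(\emptyset) = \log 1 = 0$. Since $\log$ is strictly increasing, applying it to both sides of \eqref{eq:thm2} shows that $\prod_{\alpha \subseteq \N} |K_{\alpha}|^{c_{\alpha}} \ge 1$ is \emph{equivalent}, not merely implied, to $\sum_{\alpha \subseteq \N} c_{\alpha} \log|K_{\alpha}| = \sum_{\alpha \subseteq \N} c_{\alpha} g(\alpha) \ge 0$. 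Because $\detfn$ is by construction exactly the set of functions $g$ arising from some positive definite $K$, ranging over all such $K$ is the same as ranging over all $g \in \detfn$. Hence \eqref{eq:thm2} holds for every positive definite $K$ if and only if the linear functional $L(g) := \sum_{\alpha \subseteq \N} c_{\alpha} g(\alpha)$ satisfies $L(g) \ge 0$ for every $g \in \detfn$.

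For the second step, I would observe that $L$ is a continuous linear functional on ${\mathbb R}^{2^{n}}$, so the set $H = \{ g \in {\mathbb R}^{2^{n}} : L(g) \ge 0 \}$ is a closed convex cone: it is closed as the preimage of $[0,\infty)$ under a continuous map, convex by linearity of $L$, and conic since $L(cg) = c\,L(g) \ge 0$ whenever $c > 0$ and $L(g) \ge 0$. The condition ``$L(g) \ge 0$ for all $g \in \detfn$'' is precisely the inclusion $\detfn \subseteq H$. Since $\con(\detfn)$ is by definition the smallest closed convex cone containing $\detfn$, and $H$ is itself such a cone, the inclusion $\detfn \subseteq H$ forces $\con(\detfn) \subseteq H$; that is, $L(g) \ge 0$ for all $g \in \con(\detfn)$. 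The reverse implication is immediate from $\detfn \subseteq \con(\detfn)$, so the two conditions are equivalent.

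I do not anticipate a genuine obstacle: the argument is a routine combination of the monotonicity of the logarithm with the defining minimality of $\con(\detfn)$. The only point that warrants a moment's care is verifying that $H$ is genuinely a closed convex \emph{cone} rather than merely a closed half-space, since it is exactly this conic (and closed, and convex) structure that permits the passage from $\detfn$ to its full closed convex conic hull without enlarging the family of valid inequalities.
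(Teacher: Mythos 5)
Correct, and essentially the same argument as the paper's: take logarithms to turn \eqref{eq:thm2} into a linear inequality in the log-determinant values, then pass from $\detfn$ to $\con(\detfn)$ because a homogeneous linear inequality cuts out a closed convex cone. The only difference is one of detail: you make explicit, via the half-space $H$ and the minimality of $\con(\detfn)$, the step the paper compresses into the single sentence that a linear inequality is satisfied by all $g \in \detfn$ if and only if it is satisfied by all $g \in \con(\detfn)$.
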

\begin{proof}
By taking logarithm on both sides of the inequality,   \eqref{eq:thm2} is equivalent to that 
\begin{align}
\sum_{\alpha\subseteq \N} {c_{\alpha}} \log |K_{\alpha}| \ge 0  \label{eq:thm2a}
\end{align}
for all positive definite matrix $K$. 
As \eqref{eq:thm2a} is  a linear inequality, it is satisfied by all $g\in \detfn$ if and only if it is satisfied by all $g \in  \con(\detfn) $. The theorem then follows.
\end{proof}

In other words,   the characterisation of the set of all determinant inequalities is equivalent to determining the set $\con(\detfn)$. In the rest of the paper, we will obtain inner and outer bounds on $\con(\detfn) $. 

To achieve our goal, we will take an information theoretic approach~\cite{Cover88}. The idea is very simple:  Let $\{X_{1}, \ldots, X_{n}\}$ be 
a set of scalar Gaussian random variables whose covariance matrix is equal to $  ( 1/ {2\pi e}) K$. Then the differential entropy of $X_\alpha$ is given by   
\begin{align}\label{eq:7}
h(X_\alpha) = \frac{1}{2} \log |K_{\alpha}| .
\end{align}
 
\begin{definition}[Scalar Gaussian function]\label{df:Gaussian}
A function $g \in {\mathbb R}^{2^n}$ is called \emph{s-Gaussian}  if 
there exists scalar Gaussian variables  $\{ X_{1} , \ldots, X_{n} \}$ where 
\begin{align}
g(\alpha) = h(X_\alpha)
\end{align}
for all $\alpha \subseteq \N$. 
\end{definition}

From \eqref{eq:7},  a rank function $g$ is log-determinant if and only if $\frac{1}{2}g$ is \s-Gaussian.
Let $\rgs$  be the set of all \s-Gaussian functions. Then 
\[ \con(\detfn) =  \con(\rgs).\]
Consequently, we have the following theorem.

\begin{thm}
The determinant inequality
\[
\prod_{\alpha\subseteq \N} |K_{\alpha}|^{c_{\alpha}} \ge 1
\]
holds for all positive definite matrix $K$ if and only if 
\[
\sum_{\alpha\subseteq \N} c_{\alpha } h(X_\alpha)  \ge 0
\]
for all scalar Gaussian variables  $\{ X_{1} , \ldots, X_{n} \}$.
\end{thm}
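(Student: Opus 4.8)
The plan is to recognise this statement as an almost immediate consequence of Theorem~\ref{thm:2}, once the s-Gaussian functions are matched up with the log-determinant functions. First I would invoke \eqref{eq:7}, which says that for scalar Gaussian variables with covariance matrix $(1/2\pi e)K$ one has $h(X_\alpha)=\frac{1}{2}\log|K_\alpha|$. This gives an exact correspondence: a rank function $g$ is log-determinant precisely when $\frac{1}{2}g$ is s-Gaussian, so every s-Gaussian function is a positive scalar multiple of a log-determinant function and conversely. Because a cone is invariant under positive scaling, this yields $\con(\detfn)=\con(\rgs)$, exactly the identity already recorded in the text just before the statement.

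Next I would reduce the universally quantified condition over $\con(\rgs)$ to the same condition over $\rgs$ itself. The key observation here is purely convex-geometric: a linear inequality $\sum_\alpha c_\alpha g(\alpha)\ge 0$ holds for every $g$ in a set $\set{S}$ if and only if it holds for every $g$ in $\con(\set{S})$. The forward direction is trivial since $\set{S}\subseteq\con(\set{S})$; for the reverse, the locus $\{g:\sum_\alpha c_\alpha g(\alpha)\ge 0\}$ is a closed convex cone containing $\set{S}$, and therefore contains the smallest such cone $\con(\set{S})$. Applying this with $\set{S}=\rgs$ shows that validity over $\con(\rgs)$ is equivalent to validity over all s-Gaussian functions, that is, to $\sum_\alpha c_\alpha h(X_\alpha)\ge 0$ for all scalar Gaussian variables.

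Finally I would chain these equivalences together with Theorem~\ref{thm:2}. That theorem asserts the determinant inequality holds for all positive definite $K$ if and only if $\sum_\alpha c_\alpha g(\alpha)\ge 0$ for all $g\in\con(\detfn)$. Rewriting $\con(\detfn)=\con(\rgs)$ and then collapsing the cone back to its generating set $\rgs$ by the preceding paragraph produces precisely the claimed differential-entropy condition, which closes the argument.

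I do not expect a serious obstacle, since the result is essentially a translation of Theorem~\ref{thm:2} into the language of differential entropy. The only point deserving care is the direction of the cone reduction: one must verify that imposing the inequality merely on the generators $\rgs$ already forces it on the entire closed convex cone, which is exactly the linearity-plus-cone argument above. The correspondence \eqref{eq:7} between Gaussian entropies and log-determinants supplies everything else.
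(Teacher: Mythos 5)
Your proposal is correct and follows essentially the same route as the paper, which presents this theorem as an immediate consequence of Theorem~\ref{thm:2} combined with the identity $\overline{\text{con}}(\Psi_{n})=\overline{\text{con}}(\Upsilon_{s,n})$ obtained from \eqref{eq:7}, with the cone-to-generators reduction being the same linearity argument used in the proof of Theorem~\ref{thm:2}. One cosmetic slip: in your convex-geometric step the labels ``forward'' and ``reverse'' are interchanged (the trivial direction is the one that follows from $\set{S}\subseteq\overline{\text{con}}(\set{S})$), but both arguments are present and correct.
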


In fact, the Hadamard inequality and  Szasz inequality are respectively  the counterparts of the following basic information inequalities\footnote{Han's inequality was originally proved for discrete random variables. However, by using the same proving technique, it can also be proved to hold for all continuous random variables~\cite{CoverBook}. Alternative,  its validity also follows from \cite{Chan2003Balanced}: If a balanced information inequality (including Han's inequality) holds for all discrete random variables, then its ``continuous counterpart'' (i.e., the inequality by replacing discrete entropies with differential entropies) also holds for all continuous random variables.}~\cite{Han78}
\begin{align}
\sum_{i=1}^{n} h(X_{i})  & \ge h(X_{1}, \ldots, X_{n}) \\
\frac{1}{{{k \choose l}}} \sum_{\beta \subseteq \N: |\beta| = l}
\frac{h(Y_\beta)}{l}
&\ge
\frac{1}{{{k \choose l+1}}} \sum_{\beta \subseteq \N: |\beta| = l+1}
\frac{h(Y_\beta)}{l+1}.
\end{align}

In the following sections, we will obtain inner and outer bounds on the set $\con(\rgs)$.
The following corollaries of Theorem \ref{thm:2} show 
how these bounds can be used for proving or disproving a determinant inequality.

\begin{cor}[Proving an inequality] 
Suppose $ \set S $ contains $  \con(\rgs)$ as a subset.
The determinant inequality \eqref{eq:thm2}
 holds for all positive definite matrix $K$ if 
\[
\sum_{\alpha\subseteq \N} c_{\alpha } g(\alpha)  \ge 0
\]
for all $g \in \set S $.
\end{cor}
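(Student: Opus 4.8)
The plan is to derive this directly from Theorem~\ref{thm:2} together with the identity $\con(\detfn) = \con(\rgs)$ established above, using nothing more than the monotonicity of the hypothesis under set inclusion. Since the corollary asserts only a sufficient condition, I expect to need just the ``if'' direction of Theorem~\ref{thm:2}.

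First I would observe that, because $\con(\rgs) \subseteq \set{S}$ by assumption, any linear inequality that holds for every $g \in \set{S}$ holds in particular for every $g$ lying in the subset $\con(\rgs)$. Thus the hypothesis
\[
\sum_{\alpha \subseteq \N} c_\alpha\, g(\alpha) \ge 0 \quad \text{for all } g \in \set{S}
\]
specializes immediately to
\[
\sum_{\alpha \subseteq \N} c_\alpha\, g(\alpha) \ge 0 \quad \text{for all } g \in \con(\rgs).
\]
Next I would invoke the identity $\con(\detfn) = \con(\rgs)$ to rewrite this as the statement that the linear functional $g \mapsto \sum_{\alpha} c_\alpha g(\alpha)$ is nonnegative on all of $\con(\detfn)$. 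Applying the sufficiency direction of Theorem~\ref{thm:2} then yields the determinant inequality \eqref{eq:thm2} for every positive definite $K$, which completes the argument.

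The main (and essentially only) subtlety is that we use \emph{only} the sufficiency direction of Theorem~\ref{thm:2}: enlarging the test set from $\con(\rgs)$ to a superset $\set{S}$ strengthens the hypothesis and therefore preserves sufficiency, but it destroys necessity, since the inequality may well fail somewhere on $\set{S} \setminus \con(\rgs)$ while still holding throughout $\con(\rgs)$. This is precisely why the result is phrased as a sufficient condition for \emph{proving} a determinant inequality rather than as an equivalence. Its practical value is that verifying nonnegativity over a larger but more tractable outer bound $\set{S}$ certifies the determinant inequality without requiring any explicit description of $\con(\rgs)$ itself, which is the motivation for seeking the outer bounds developed in the sequel.
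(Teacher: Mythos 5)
Your proposal is correct and is precisely the argument the paper intends: the corollary is stated without proof as an immediate consequence of Theorem~\ref{thm:2} together with the identity $\con(\detfn) = \con(\rgs)$, which is exactly the chain (restrict the hypothesis from $\set{S}$ to the subset $\con(\rgs)$, rewrite as nonnegativity on $\con(\detfn)$, apply the sufficiency direction of Theorem~\ref{thm:2}) that you spell out. Your remark on why only the ``if'' direction survives the enlargement of the test set is a correct and worthwhile clarification of why the corollary is one-directional.
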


Therefore, any explicit outer bound on  $  \con(\rgs)$ can lead to the discovery of 
new determinant inequalities. On the other hand, an inner bound on 
$  \con(\rgs)$ can be used for disproving a determinant inequality.

\begin{cor}[Disproving an  inequality] 
Suppose $ {\set T} \subseteq \con(\rgs) $. The determinant inequality \eqref{eq:thm2}
 does not hold for all positive definite matrices  if there exists $g\in \set T$ such that 
\[
\sum_{\alpha\subseteq \N} c_{\alpha } g(\alpha) < 0.
\] 
\end{cor}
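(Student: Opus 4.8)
The plan is to derive this corollary directly from Theorem~\ref{thm:2} by contraposition, mirroring the companion corollary on proving inequalities. First I would rewrite Theorem~\ref{thm:2} using the identity $\con(\detfn) = \con(\rgs)$ recorded just above: the determinant inequality \eqref{eq:thm2} holds for every positive definite $K$ if and only if
\[
\sum_{\alpha \subseteq \N} c_{\alpha}\, g(\alpha) \ge 0 \quad \text{for all } g \in \con(\rgs).
\]

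Next I would negate this biconditional. The contrapositive of the forward implication states that \eqref{eq:thm2} fails for some positive definite matrix precisely when there exists $g \in \con(\rgs)$ with $\sum_{\alpha \subseteq \N} c_{\alpha}\, g(\alpha) < 0$. The hypothesis then does the rest: since $\set{T} \subseteq \con(\rgs)$, any witness $g \in \set{T}$ satisfying $\sum_{\alpha \subseteq \N} c_{\alpha}\, g(\alpha) < 0$ automatically lies in $\con(\rgs)$, and hence supplies exactly such a violating point. Therefore \eqref{eq:thm2} cannot hold for all positive definite matrices.

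I do not anticipate a substantive obstacle, as the argument is purely logical once Theorem~\ref{thm:2} is in hand. The only point warranting care is keeping the inclusion in the correct direction: disproving requires an \emph{inner} bound $\set{T} \subseteq \con(\rgs)$, so that a violation located inside $\set{T}$ certifies a genuine violation inside the true cone. This is precisely dual to the outer bound $\set{S} \supseteq \con(\rgs)$ exploited in the corollary on proving inequalities, and reversing the inclusion would break the implication.
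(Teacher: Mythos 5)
Your proposal is correct and matches the paper's intent exactly: the paper offers no separate proof of this corollary, treating it as an immediate logical consequence of Theorem~\ref{thm:2} together with the identity $\con(\detfn) = \con(\rgs)$, which is precisely the contrapositive argument you give. Your remark about the direction of the inclusion (inner bound $\set{T} \subseteq \con(\rgs)$ for disproving, versus an outer bound for proving) is also the same point the paper makes in the surrounding discussion.
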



%

\section{An inner bound and an outer bound}

As discussed earlier,  log-determinant functions are essentially the same as \s-Gaussian functions. 
Our objective is thus  to characterise $\con(\rgs)$,  or at least to understand its basic properties.
%
%
Since scalar Gaussian random variables are continuous scalar random variables, the next lemma follows immediately from the definition.
\begin{lemma}[Outer bound]\label{lemma:outerbd}
\begin{align}
\rgs  \subseteq \rs ,
\end{align}
and consequently, 
\begin{align}\label{eq:firstout}
\con(\rgs  ) \subseteq \con(\rs ) .
\end{align}
\end{lemma}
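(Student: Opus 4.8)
The plan is to establish the two displayed inclusions separately: the first directly from the definitions, and the second by invoking the monotonicity of the $\con(\cdot)$ operation.

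First I would prove $\rgs \subseteq \rs$. Let $g \in \rgs$ be an arbitrary \s-Gaussian function. By Definition~\ref{df:Gaussian}, there exist scalar Gaussian random variables $\{X_1,\ldots,X_n\}$ with $g(\alpha) = h(X_\alpha)$ for all $\alpha \subseteq \N$. Since a scalar Gaussian random variable is in particular a continuous scalar random variable, this very same family $\{X_1,\ldots,X_n\}$ witnesses that $g$ is \s-entropic. Hence $g \in \rs$, and as $g$ was arbitrary, $\rgs \subseteq \rs$.

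For the second inclusion I would appeal to the monotonicity of $\con(\cdot)$ with respect to set inclusion. By definition $\con(\rs)$ is a closed convex cone containing $\rs$; combined with $\rgs \subseteq \rs$ from the first step, it is therefore a closed convex cone that also contains $\rgs$. Since $\con(\rgs)$ is by definition the smallest closed convex cone containing $\rgs$, minimality forces $\con(\rgs) \subseteq \con(\rs)$, which is precisely \eqref{eq:firstout}.

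I do not anticipate any genuine obstacle here, and indeed the lemma is flagged as following immediately from the definition. The first inclusion reduces to the observation that the class of Gaussian variables sits inside the class of continuous scalar variables, and the second is a purely formal consequence of the minimality property defining $\con(\cdot)$. The only point worth stating explicitly is this monotonicity of the closed convex conical hull, which holds because the hull of the larger set is already a closed convex cone enclosing the smaller set, while the hull of the smaller set is by construction the least such cone.
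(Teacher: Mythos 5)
Your proof is correct and follows exactly the paper's reasoning: the paper dispenses with this lemma in one line (``Since scalar Gaussian random variables are continuous scalar random variables, the next lemma follows immediately from the definition''), which is precisely your first step, and your second step spelling out the monotonicity of $\con(\cdot)$ is the same implicit formal argument. No gaps; you have simply made explicit what the paper leaves to the reader.
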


It is well known that ${\bar\Gamma}_{n}^{*} $ (i.e., the closure of $\Gamma_{n}^{*}$) is a closed and convex cone~\cite{R.W.Yeung1997A-framework}. It was established in \cite{Chan2003Balanced} that 
\begin{align}\label{5}
\con( \rs)    = \con({\bar\Gamma}^*_n , \phi^n_{1} , \ldots, \phi^n_{n})  
\end{align}
where
\[
\phi^n_{i}(\alpha) = 
\begin{cases}
-1 & \text{ if } i\in\alpha \\
0 & \text{ otherwise. }
\end{cases}
\]

In the following, we prove an inner bound on $\con(\rgs )$ by using representable functions.

\begin{definition}[$s$-representable function]
A rank function $g$ over $\N$ is called \emph{s-representable} if  there exists  real-valued vectors (of the same length) $\{ A_{1} , \ldots ,  A_{n}\} $ such that for all $\alpha\subseteq \N$,
\[
g(\alpha ) = \dim { \langle  A_{i} , i\in\alpha\rangle }.
\]
In other words, $g(\alpha )$ is the maximum number of independent vectors in the set $\{ A_{i} , i\in\alpha  \}$.
\end{definition}

\begin{thm}[Inner bound]\label{thm:repisgaussian}
If $g$ is \s-representable, then 
\[
g\in{\bar {\mathbb W}} (\rgs).
\]
\end{thm}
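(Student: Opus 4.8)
The plan is to realise the dimension function $g$ as a limit of suitably normalised $s$-Gaussian functions, with Gaussian noise added to regularise the degenerate (rank-deficient) directions. Fix real vectors $A_{1},\dots,A_{n}\in\mathbb{R}^{d}$ with $g(\alpha)=\dim\langle A_{i},i\in\alpha\rangle$, let $Z=(Z_{1},\dots,Z_{d})$ and $W=(W_{1},\dots,W_{n})$ be mutually independent collections of i.i.d.\ standard Gaussians, and for a parameter $t>1$ define the scalar Gaussian variables
\[
X_{i}^{(t)}=t\,\langle A_{i},Z\rangle+W_{i},\qquad i\in\N .
\]
The noise term $W_{i}$ is the crucial ingredient: it guarantees that the covariance of $X^{(t)}_{\alpha}$ is nonsingular even when the $A_{i}$ are linearly dependent, so that all differential entropies remain finite. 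Writing $G_{\alpha}=(\langle A_{i},A_{j}\rangle)_{i,j\in\alpha}$ for the Gram submatrix, a direct computation gives $\cov(X^{(t)}_{\alpha})=t^{2}G_{\alpha}+I$, so the rank function $g_{t}(\alpha):=h(X^{(t)}_{\alpha})$ is $s$-Gaussian, i.e.\ $g_{t}\in\rgs$.

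Next I would extract the rank from the large-$t$ asymptotics of $g_{t}$. Because $G_{\alpha}$ is a Gram matrix of the vectors $\{A_{i}:i\in\alpha\}$, its rank equals $\dim\langle A_{i},i\in\alpha\rangle=g(\alpha)$; let $\lambda_{1},\dots,\lambda_{g(\alpha)}$ be its positive eigenvalues, the remaining eigenvalues being zero. Then
\[
g_{t}(\alpha)=\frac{|\alpha|}{2}\log(2\pi e)+\frac12\sum_{j=1}^{g(\alpha)}\log\!\bigl(t^{2}\lambda_{j}+1\bigr),
\]
where the zero eigenvalues contribute factors $\log(0+1)=0$. Since $\log(t^{2}\lambda_{j}+1)=2\log t+\log\lambda_{j}+o(1)$ as $t\to\infty$, the leading term is $g(\alpha)\log t$ while every other contribution stays bounded uniformly in $t$. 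Setting $c_{t}=1/\log t>0$ therefore yields $c_{t}\,g_{t}(\alpha)\to g(\alpha)$ for each $\alpha\subseteq\N$, the correction being of order $1/\log t$.

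Finally, because there are only finitely many subsets $\alpha$, this pointwise convergence is exactly convergence in the Euclidean space $\mathbb{R}^{2^{n}}$. Each $c_{t}g_{t}$ lies in $\mathbb{W}(\rgs)$ (as $c_{t}>0$ and $g_{t}\in\rgs$), so the limit $g$ lies in the closure $\bar{\mathbb{W}}(\rgs)$, which is the claim. I expect the only delicate point to be the bookkeeping for the rank-deficient case: one must check that the singular directions of $G_{\alpha}$ neither blow up nor spoil the normalisation, which is precisely what the additive identity supplied by the noise $W_{i}$ secures, and that the degenerate eigenvalues of $G_\alpha$ are tracked consistently across all subsets simultaneously. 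The edge case $\alpha=\emptyset$ is handled by the convention $g(\emptyset)=0=h(X_{\emptyset})$.
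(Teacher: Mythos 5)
Your proof is correct and is essentially the paper's own argument: with $t=1/\sqrt{c}$, your construction $X_i^{(t)}=t\langle A_i,Z\rangle+W_i$ is exactly the paper's $X_i=\frac{1}{\sqrt c}A_i[W_1,\ldots,W_k]^\top+V_i$, and both proofs extract $g(\alpha)$ as the coefficient of the $\log t$ (equivalently $\tfrac12\log(1/c)$) growth of the entropies via the eigenvalues of the Gram matrix $G_\alpha=(AA^\top)_\alpha$. The only cosmetic difference is that the paper normalises by $\tfrac12\log(1/c)$ while you normalise by $\log t$, which are the same quantity.
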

\begin{proof}
Suppose the length of each row vector $A_{i}$ is  $k$.
Let 
\[
\{ W_{1}, \ldots, W_{k}, V_{1}, \ldots, V_{n} \}
\]
 be a set of independent standard Gaussian random variables. Therefore, its covariance matrix  is the $(n+k) \times (n+k)$ identity matrix. Let $c>0$.  For each $i=1,\ldots , n$, define a real-valued continuous random variable as follows
\[
X_{i} \triangleq \frac{1}{\sqrt{c}}A_{i} [ W_{1} , \ldots , W_{k} ]^{\top} + V_{i}.
\]
Let ${\bf X} = [X_{1}, \ldots, X_{n}]^{\top}$. Then 
\[
{\bf X} = \frac{1}{\sqrt{c}}A  [ W_{1} , \ldots , W_{k} ]^{\top} + {\bf V} 
\]
where $A$ is an $n \times k$ matrix whose $i^{th}$ row is $A_{i}$ and 
\[
{\bf V} = [V_{1}, \ldots, V_{n}]^{\top}.
\]

Since $X_{i}$ is zero-mean,  
\begin{align*}
\cov ({\bf X}) &= E[ {\bf X} {\bf X}^{\top}]  \\
&= \frac{1}{c} E\left[ A  [ W_{1} , \ldots , W_{k} ]^{\top}  [ W_{1} , \ldots , W_{k} ] A^{\top} \right] +  {\bf I} \\
&= \frac{1}{c}  A A^{\top}  +  {\bf I} .
\end{align*}
Consequently, 
\begin{align}
\det (\cov ({\bf X})) = \det \left(  \frac{1}{c} D + {\bf I} \right) 
\end{align}
where $D$ is the diagonal matrix obtained by using singular-value decomposition (SVD) over $ A A^{\top}$. 
Let $d_{1} \ge d_{2} \ge \cdots \ge d_{n} \ge 0$ be the diagonal entries of $D$ and $r$ be the rank of the matrix $ A A^{\top}$ (or equivalently, the rank of $A$).  Hence, $d_{i} >0$ if and only if $i \le r$. Then 
\begin{align}
\det (\cov ({\bf X})) = \prod_{i=1}^{r}  \left(  \frac{d_{i}}{c}  + 1  \right). 
\end{align}
It is easy to see that 
\begin{align}
\lim_{c \to  0} \frac{h(X_{1}, \ldots, X_{n})}{\frac{1}{2} \log 1/c} & =  \lim_{c \to 0} \frac{ \frac{1}{2} \log \left(  (2 \pi e)^{n} \det (\cov ({\bf X})) \right) }{\frac{1}{2} \log 1/c}\\
 & =  \lim_{c \to 0} \frac{  \log \left(   \det (\cov ({\bf X})) \right) }{  \log 1/c}\\
  & =  \lim_{c \to 0} \frac{ \sum_{i=1}^{r}   \log   \left(  \frac{d_{i}}{c}  + 1  \right) }{  \log 1/c}\\
&= r.
\end{align}

Similarly, for any $\alpha \subseteq \{1, \ldots, n\}$, we can prove that 
\[
\lim_{c\to 0} \frac{h(X_\alpha) }{\frac{1}{2} \log 1/c} = \dim \langle  A_{i}, i\in\alpha \rangle = g(\alpha).
\]
Thus, $g\in {\bar {\mathbb W}} (\rgs)$ and the theorem is proved. 
\end{proof}

\begin{lemma}\label{lemma:5}
Let $\{X_{1} , \ldots, X_{n} \}$ be a set of scalar jointly continuous random variables with 
differential entropy function $g$. 
For any $c_{1}, \ldots, c_{n } > 0$, define the set of random variables
$\{Y_{1} , \ldots, Y_{n} \}$ by 
\[
Y_{i} = X_{i} /c_{i}  , \: \forall i\in\N ,
\]
and let $g^{*}$ be the differential entropy function of $\{Y_{1} , \ldots, Y_{n} \}$. 
 Then 
\begin{align}\label{eq:41}
g^{*}(\alpha) & = g(\alpha) + \sum_{i\in\alpha} \log c_{i} \\
& = g(\alpha) - \sum_{i\in\N} (\log c_{i} ) \phi^n_{i}(\alpha) 
\end{align}
for all $\alpha \subseteq \N$. 
Consequently, if $g$ is \s-Gaussian, then so is $g^{*}$.
\end{lemma}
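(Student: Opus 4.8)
The plan is to derive both displayed identities from a single elementary fact: differential entropy transforms additively under an invertible linear change of coordinates. Concretely, if $\mathbf{Z}$ is a continuous random vector with a density and $B$ is an invertible matrix, then $h(B\mathbf{Z}) = h(\mathbf{Z}) + \log|\det B|$. This follows at once from the change-of-variables formula, since the density of $B\mathbf{Z}$ is $z \mapsto |\det B|^{-1} f_{\mathbf{Z}}(B^{-1}z)$, and the substitution $z = Bw$ turns the integral defining $h(B\mathbf{Z})$ into the integral defining $h(\mathbf{Z})$ plus the constant $\log|\det B|$.

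First I would fix $\alpha \subseteq \N$ and observe that the map carrying $X_\alpha = (X_i)_{i\in\alpha}$ to $Y_\alpha = (Y_i)_{i\in\alpha}$ is the diagonal linear map with $i$th entry $1/c_i$. Because every $c_i > 0$ this map is invertible, and its determinant is the product of its diagonal entries. Applying the transformation rule above therefore writes $g^{*}(\alpha) = h(Y_\alpha)$ as $g(\alpha) = h(X_\alpha)$ plus $\log|\det B|$, and expanding the determinant splits this correction as a sum over $i\in\alpha$ of the logarithms of the scale factors, which is exactly the additive term recorded in \eqref{eq:41}. The second equality in the statement is then pure bookkeeping: since $\phi^n_{i}(\alpha)$ equals $-1$ when $i\in\alpha$ and $0$ otherwise, the sum over $i\in\alpha$ can be rewritten as $-\sum_{i\in\N}(\log c_i)\,\phi^n_{i}(\alpha)$.

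For the final ``consequently'' claim, suppose $g$ is \s-Gaussian, realised by scalar Gaussian variables $\{X_1,\ldots,X_n\}$. Each $Y_i = X_i/c_i$ is a scalar multiple of a Gaussian variable and hence again scalar Gaussian, and $\{Y_1,\ldots,Y_n\}$ is jointly Gaussian because it is a fixed (diagonal) linear image of a Gaussian vector. By Definition~\ref{df:Gaussian} its differential entropy function $g^{*}$ is therefore \s-Gaussian, so $g^{*}\in\rgs$.

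I do not expect a genuine obstacle: the argument is a one-line change of variables followed by the identity defining $\phi^n_{i}$. The only points deserving a word of care are that the coordinate map is invertible (guaranteed by $c_i>0$), that scaling by a nonzero constant preserves the existence and finiteness of the densities and differential entropies involved, and that the determinant of a diagonal map factorises so that the correction is a genuine sum over the coordinates in $\alpha$. It is worth recording the downstream point that, since the $c_i$ sweep out all of $(0,\infty)$, the shifts $\log c_i$ sweep out all of $\mathbb{R}$; this is precisely what lets one add arbitrary multiples of the $\phi^n_{i}$ to an \s-Gaussian (or \s-entropic) function while staying inside the same region.
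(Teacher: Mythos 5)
Your approach is essentially identical to the paper's: the paper's proof simply records the transformed joint pdf $f_{Y_{1},\ldots,Y_{n}}(y_{1},\ldots,y_{n})=\bigl(\prod_{i=1}^{n}c_{i}\bigr)f_{X_{1},\ldots,X_{n}}(c_{1}y_{1},\ldots,c_{n}y_{n})$ and states that \eqref{eq:41} follows by direct verification, which is exactly your change-of-variables argument specialised to a diagonal invertible map; your treatment of the \s-Gaussian consequence is also the intended one. One point you glossed over, however: executed carefully, your computation does \emph{not} produce \eqref{eq:41} as displayed. With $Y_{i}=X_{i}/c_{i}$ the map $B$ is diagonal with entries $1/c_{i}$, so $\log|\det B|=-\sum_{i\in\alpha}\log c_{i}$, giving $g^{*}(\alpha)=g(\alpha)-\sum_{i\in\alpha}\log c_{i}=g(\alpha)+\sum_{i\in\N}(\log c_{i})\,\phi^{n}_{i}(\alpha)$, i.e.\ the opposite sign to the one in the statement; your claim that the determinant expansion "is exactly the additive term recorded in \eqref{eq:41}" hides this. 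The slip originates in the paper itself (its own pdf formula also yields the minus sign; the statement would be literally correct under the convention $Y_{i}=c_{i}X_{i}$), so you have reproduced a typo rather than introduced an error, and, as you correctly observe at the end, it is harmless for every downstream use: since $\log c_{i}$ sweeps all of $\mathbb{R}$ as $c_{i}$ sweeps $(0,\infty)$, either sign convention certifies the same family of perturbations $g+\sum_{i}t_{i}\phi^{n}_{i}$, $t_{i}\in\mathbb{R}$, as remaining \s-Gaussian.
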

\begin{proof} 
Let $f_{X_{1},\ldots, X_{n}}$ and $f_{Y_{1},\ldots, Y_{n}}$ be respectively the probability density functions (pdfs) of  $\{X_{1} , \ldots, X_{n} \}$ and $\{Y_{1} , \ldots, Y_{n} \}$. Then  
\begin{multline}
f_{Y_{1},\ldots, Y_{n}} (y_{1}, \ldots, y_{n}) \\ 
= \left( \prod_{i=1}^{n}  c_{i} \right) f_{X_{1},\ldots, X_{n}}(c_{1}y_{1} , \ldots, c_{n}y_{n}),
\end{multline}
and \eqref{eq:41} can be directly verified. 
\end{proof}

\begin{cor}\label{cor:3}
\begin{multline}
\con(\Omega_{s,n}, \phi^n_{1} , \ldots, \phi^n_{n}) \subseteq   \con(\rgs ) \subseteq  \con(\rs) \\ =\con({\bar\Gamma}^*_n , \phi^n_{1} , \ldots, \phi^n_{n} )
\end{multline}
where $\Omega_{s,n}$ is the set of all \s-representable functions.
\end{cor}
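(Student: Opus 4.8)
The plan is to read the displayed chain as three separate assertions and dispatch them in turn. The final equality $\con(\rs)=\con({\bar\Gamma}^*_n,\phi^n_1,\ldots,\phi^n_n)$ is precisely \eqref{5}, quoted from \cite{Chan2003Balanced}, and the middle inclusion $\con(\rgs)\subseteq\con(\rs)$ is exactly Lemma \ref{lemma:outerbd}. So the only real content is the leftmost inclusion. Here I would exploit the fact that $\con(\rgs)$ is, by definition, a closed convex cone: it therefore contains the smallest closed convex cone generated by any family of its own members. Consequently it suffices to show that each generator appearing on the left-hand side already lies in $\con(\rgs)$, i.e.\ that $\Omega_{s,n}\subseteq\con(\rgs)$ and that $\phi^n_i\in\con(\rgs)$ for every $i\in\N$.

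The inclusion $\Omega_{s,n}\subseteq\con(\rgs)$ is immediate from what has already been proved: Theorem \ref{thm:repisgaussian} states that every $s$-representable $g$ lies in $\bar{\mathbb{W}}(\rgs)$, and the elementary chain $\set{S}\subseteq\mathbb{W}(\set{S})\subseteq\bar{\mathbb{W}}(\set{S})\subseteq\con(\set{S})$, applied with $\set{S}=\rgs$, then places $g$ inside $\con(\rgs)$.

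The substantive step, and the one I expect to be the crux, is verifying $\phi^n_i\in\con(\rgs)$; this is where Lemma \ref{lemma:5} does the work. Fix $i$ and start from the concrete $s$-Gaussian function $g_0$ obtained from $n$ independent standard Gaussian variables. Applying Lemma \ref{lemma:5} with $c_i=c$ and $c_j=1$ for $j\neq i$ shows that $g_0-(\log c)\phi^n_i$ is again $s$-Gaussian. Choosing $0<c<1$ and writing $\log c=-t$ with $t>0$, this reads $g_0+t\phi^n_i\in\rgs$, so $\tfrac1t g_0+\phi^n_i\in\mathbb{W}(\rgs)$. Letting $t\to\infty$ (equivalently $c\to 0$) kills the first term and exhibits $\phi^n_i$ as a limit of points of $\mathbb{W}(\rgs)$, whence $\phi^n_i\in\bar{\mathbb{W}}(\rgs)\subseteq\con(\rgs)$. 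Once both families of generators are shown to lie in $\con(\rgs)$, the leftmost inclusion follows and the corollary is assembled. The only genuinely non-formal idea is recognising that the coordinatewise scaling freedom in Lemma \ref{lemma:5} moves the $s$-Gaussian cone exactly in the $\phi^n_i$ directions, so that these rank functions are recovered in the scaling limit; everything else is bookkeeping with the inclusion chain and the two previously established identities.
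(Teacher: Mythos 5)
Your proposal is correct and uses exactly the ingredients the paper's one-line proof cites (Lemma \ref{lemma:outerbd} for the middle inclusion, \eqref{5} for the equality, Theorem \ref{thm:repisgaussian} for $\Omega_{s,n}\subseteq\con(\rgs)$, and Lemma \ref{lemma:5} for placing each $\phi^n_i$ in $\con(\rgs)$ via the scaling limit $c\to 0$). You have simply made explicit the details the paper leaves as ``a direct consequence,'' so the approach is essentially identical.
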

\begin{proof}
A direct consequence of Lemmas \ref{lemma:outerbd} and \ref{lemma:5},  Theorem \ref{thm:repisgaussian} and \eqref{5}.
\end{proof}

\begin{proposition}[Tightness of inner and outer bounds]\label{prop:Nis3}
For $n \le  3$, 
\begin{multline}
\con(\Omega_{s,n}, \phi^n_{1} , \ldots, \phi^n_{n}) =  \con(\rgs ) \\ =  \con(\rs)  =\con({\bar\Gamma}^*_n , \phi^n_{1} , \ldots, \phi^n_{n} ).
\end{multline}
\end{proposition}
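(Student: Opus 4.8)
The plan is to collapse the chain of inclusions already supplied by Corollary~\ref{cor:3}. That corollary gives
$\con(\Omega_{s,n},\phi^{n}_{1},\ldots,\phi^{n}_{n}) \subseteq \con(\rgs) \subseteq \con(\rs) = \con(\bar{\Gamma}^{*}_{n},\phi^{n}_{1},\ldots,\phi^{n}_{n})$,
so it suffices to prove the single reverse inclusion $\con(\bar{\Gamma}^{*}_{n},\phi^{n}_{1},\ldots,\phi^{n}_{n}) \subseteq \con(\Omega_{s,n},\phi^{n}_{1},\ldots,\phi^{n}_{n})$. Since both of these closed convex cones already contain $\phi^{n}_{1},\ldots,\phi^{n}_{n}$, and each side is the smallest closed convex cone containing its generators, this reduces to showing $\bar{\Gamma}^{*}_{n} \subseteq \con(\Omega_{s,n},\phi^{n}_{1},\ldots,\phi^{n}_{n})$. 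I expect in fact to establish the stronger statement $\bar{\Gamma}^{*}_{n} \subseteq \con(\Omega_{s,n})$, so that the $\phi^{n}_{i}$ are needed only to match the outer side.

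The one place the hypothesis $n\le 3$ enters is the classical fact that there are no non-Shannon-type inequalities in this range, i.e. $\bar{\Gamma}^{*}_{n}=\Gamma_{n}$, where $\Gamma_{n}$ is the Shannon cone of all monotone, submodular rank functions vanishing on $\emptyset$~\cite{R.W.Yeung1997A-framework}. Now $\Gamma_{n}$ is a pointed polyhedral cone, cut out by the finitely many elemental Shannon inequalities, so it is the convex conic hull of its finitely many extreme rays. Hence it is enough to show that every extreme ray of $\Gamma_{n}$ lies in $\con(\Omega_{s,n})$.

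I would therefore enumerate the extreme rays of $\Gamma_{n}$ for $n\le 3$ and check that each is, up to a positive scalar, the rank function of a matroid on at most three elements. For $n\le 3$ this list consists (up to permutation) of the indicator functions $-\phi^{n}_{i}$ modelling a single non-loop, the rank functions of a parallel pair with the remaining element a loop, and the uniform matroids $U_{1,3}$ and $U_{2,3}$; composite polymatroids such as the free matroid are \emph{not} extreme, since e.g. $U_{3,3}=\sum_{i}(-\phi^{n}_{i})$. Every matroid on at most three elements is representable over $\mathbb{R}$ (zero vectors model loops, repeated vectors model parallel elements), so each such rank function is $s$-representable by the definition of $s$-representability and therefore lies in $\Omega_{s,n}$. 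Taking the convex conic hull then yields $\Gamma_{n}\subseteq\con(\Omega_{s,n})$, and combined with the reduction above and Corollary~\ref{cor:3} all four cones coincide.

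The main obstacle is the extreme-ray analysis for $n=3$: one must verify that the matroid rank functions above are \emph{exactly} the extreme rays, so that nothing non-matroidal (and hence potentially non-representable) survives, and that they positively span $\Gamma_{3}$. This is a finite vertex-enumeration and linear-algebra computation, and it is precisely here that the restriction $n\le 3$ is essential, since for larger $n$ both non-representable matroids and non-Shannon inequalities appear. The bookkeeping is streamlined by the observation that each $-\phi^{n}_{i}$ is itself $s$-representable (take $A_{i}$ a nonzero vector and $A_{j}=0$ for $j\neq i$), which shows the ``scaling'' directions are already captured on the inner side.
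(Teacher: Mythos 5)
Your proposal is correct and is essentially the paper's own proof: both reduce via Corollary~\ref{cor:3} to showing ${\bar{\Gamma}}^*_n \subseteq \con(\Omega_{s,n})$ for $n\le 3$, and both establish this from the Zhang--Yeung extreme-ray determination of ${\bar{\Gamma}}^*_n$ (equivalently of $\Gamma_n$, since the two cones coincide for $n\le 3$) together with the observation that every extreme ray is the rank function of a matroid representable over $\mathbb{R}$, hence \emph{s}-representable. The only difference is one of exposition: you enumerate the extreme rays explicitly (indicators, parallel pairs with a loop, $U_{1,3}$, $U_{2,3}$) and check representability by hand, whereas the paper delegates both the list and its representability to the citation of \cite{Zhang1998On-the-characterization}.
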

\begin{proof}
By Corollary \ref{cor:3}, to prove the proposition, it suffices to prove that for $n \le 3$,
\begin{align}\label{eq:28}
\con({\bar\Gamma}^*_n ) \subseteq \con(\Omega_{s,n}).
\end{align}
In \cite{Zhang1998On-the-characterization}, the cone ${\bar\Gamma}^*_{n}$ (when $n\le 3$) was explicitly determined by identifying the set of extreme vectors 
of the cone.  It can be proved that all the extreme vectors are \s-representable\footnote{In \cite{Zhang1998On-the-characterization}, the extreme vectors are proved to be  representable with respect to a finite field. However, it can be verified easily that they are also \s-representable with respect to the real field $\mathbb R$.} and hence is a subset of 
$\con(\Omega_{s,n})$. Consequently, \eqref{eq:28} holds and the proposition follows.
\end{proof}

Proposition \ref{prop:Nis3} does not hold when $n \ge 4$. 
In fact, $\con(\Omega_{s,n}, \phi^n_{1} , \ldots, \phi^n_{n})$ is in general  a proper subset of $\con(\rgs)$ when $n \ge 4$. 
In \cite{ingleton71}, it was proved that all \s-representable functions satisfy the Ingleton inequalities.
It can also be directly verified that all the functions $\phi^n_{i}$ also satisfy the Ingleton inequalities. 
Therefore, all the functions in   $\con(\Omega_{s,n}, \phi^n_{1} , \ldots, \phi^n_{n})$ also satisfy the
Ingleton inequalities. However, in \cite{Hassibi2008The-entropy}, it was proved that there exists $g\in \rgs$ for $n=4$  that 
 violates the the Ingleton inequality. Thus, $\con(\Omega_{s,n}, \phi^n_{1} , \ldots, \phi^n_{n})$ is indeed a proper subset of $\con(\rgs)$.


\section{Another outer bound}

By definition, the set $\con(\detfn )$ (which is the focus of our interest) is close under addition. However, this is not necessarily true for $\detfn$. In fact, $\bar{\mathbb{W}}(\detfn)$ is not necessarily equal to $\con(\detfn)$.

In the previous section, we showed that the set $\detfn$ is essentially equivalent to the set of \s-Gaussian functions, defined via sets of scalar Gaussian random variables. It turns out that, if we relax the constraint by allowing the Gaussian random variables to be vectors, instead of scalars, we will obtain an outer bound for $\detfn$ and also  $\con(\detfn)$.
 
\begin{definition}[Vector Gaussian function]
A function $g \in {\mathbb R}^{2^n}$ is called \emph{v-Gaussian}  if 
there exists $n$   Gaussian random vectors  $\{ X_{1} , \ldots, X_{n} \}$ 
such that 
\begin{align}
g(\alpha) = h(X_\alpha)
\end{align}
for all $\alpha \subseteq \N$. 
\end{definition}

\begin{lemma} \label{lemma:3}
$\con(\rgv) = \bar{\mathbb{W}}(\rgv)$.
\end{lemma}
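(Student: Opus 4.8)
The plan is to prove the nontrivial inclusion $\con(\rgv) \subseteq \bar{\mathbb{W}}(\rgv)$, since the reverse inclusion $\bar{\mathbb{W}}(\rgv) \subseteq \con(\rgv)$ is already part of the general chain $\set{S} \subseteq \mathbb{W}(\set{S}) \subseteq \bar{\mathbb{W}}(\set{S}) \subseteq \con(\set{S})$ recorded earlier. Because $\con(\rgv)$ is by definition the \emph{smallest} closed convex cone containing $\rgv$, it suffices to show that $\bar{\mathbb{W}}(\rgv)$ is \emph{itself} a closed convex cone containing $\rgv$; minimality then forces $\con(\rgv) \subseteq \bar{\mathbb{W}}(\rgv)$, and equality follows.

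Two of the three required properties are immediate. The set $\bar{\mathbb{W}}(\rgv)$ contains $\rgv$ and is closed, being by definition the topological closure of $\mathbb{W}(\rgv)$. It is also a cone: $\mathbb{W}(\rgv)$ is a union of rays and is therefore invariant under multiplication by any $c>0$, and this invariance passes to the closure. Since a closed cone is convex exactly when it is closed under addition, the whole argument reduces to showing that $\bar{\mathbb{W}}(\rgv)$ is closed under addition.

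Here the essential feature of the \emph{vector} Gaussian model enters, and this is the step I expect to carry the real content. First I would show that $\rgv$ itself is closed under addition: given $g_1, g_2 \in \rgv$ realised by jointly Gaussian families of random vectors $\{X_i\}$ and $\{X_i'\}$, I take the two families to be independent and stack them coordinatewise, setting $\tilde X_i = (X_i, X_i')$. The concatenated family is again jointly Gaussian, and independence yields $h(\tilde X_\alpha) = h(X_\alpha) + h(X'_\alpha)$ for every $\alpha \subseteq \N$, so $g_1 + g_2 \in \rgv$. (This stacking is exactly what fails for scalar Gaussians, which is why the earlier sections required $\con$ rather than $\bar{\mathbb{W}}$.) Iterating gives $m g \in \rgv$ for every positive integer $m$ and every $g \in \rgv$, whence any rational positive combination $\tfrac{p_1}{q} g_1 + \tfrac{p_2}{q} g_2 = \tfrac{1}{q}(p_1 g_1 + p_2 g_2)$ lies in $\mathbb{W}(\rgv)$.

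Finally I would pass from rational to arbitrary positive coefficients by a closure argument. Since $\bar{\mathbb{W}}(\rgv)$ is closed, to prove it is closed under addition it suffices to show that $x + y \in \bar{\mathbb{W}}(\rgv)$ whenever $x = c_1 g_1$ and $y = c_2 g_2$ lie in $\mathbb{W}(\rgv)$: a general pair in $\bar{\mathbb{W}}(\rgv)$ is approximated by sequences from $\mathbb{W}(\rgv)$, and the limit of the sums stays in the closed set $\bar{\mathbb{W}}(\rgv)$. For such $x,y$, choosing rationals $c_1^{(k)} \to c_1$ and $c_2^{(k)} \to c_2$ makes $c_1^{(k)} g_1 + c_2^{(k)} g_2$ an element of $\mathbb{W}(\rgv)$ by the previous step, and these converge to $x+y$, so $x+y \in \bar{\mathbb{W}}(\rgv)$. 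This establishes that $\bar{\mathbb{W}}(\rgv)$ is a closed convex cone, completing the proof. The main obstacle is really the additivity of $\rgv$; once the coordinatewise stacking of independent Gaussian vectors is secured, the passage through rational and then real coefficients is routine.
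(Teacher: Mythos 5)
Your proposal is correct and takes essentially the same route as the paper: the heart of both arguments is that $\rgv$ is closed under addition (by stacking independent Gaussian vector families), so that rational positive combinations lie in $\mathbb{W}(\rgv)$ and arbitrary positive combinations lie in $\bar{\mathbb{W}}(\rgv)$ by a limiting argument. The only difference is one of packaging—you conclude via ``$\bar{\mathbb{W}}(\rgv)$ is a closed convex cone, hence contains $\con(\rgv)$ by minimality,'' and spell out the limit steps that the paper's terser proof leaves implicit.
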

\begin{proof}
It is clear from the definition that  
${\bar {\mathbb W}} (\rgv) \subseteq {\con(\rgv )} $. Now, consider positive integers $k, \ell_{1}, \ell_{2}$ and  $g_{1}, g_{2} \in \rgv$. It is easy to see that 
\[
\ell_{1} g_{1} + \ell_{2} g_{2} \in \rgv.
\]
Hence,
\[
\frac{\ell_{1}}{k} g_{1} + \frac{\ell_{2}}{k} g_{2}   \in \mathbb W(\rgv).
\]
 Since $k, \ell_{1}, \ell_{2}$ are arbitrary positive integers,   for any positive numbers $c_{1}, c_{2} > 0$, 
\[
c_{1} g_{1} + c_{2 }g_{2} \in  {\bar {\mathbb W}} (\rgv)
\]
 and the lemma follows. 
\end{proof}

\begin{thm}[Another outer bound]
\begin{align}\label{eq:secondout}
 \con(\rgs)  \subseteq  {\bar {\mathbb W}} (\rgv). 
\end{align}
\end{thm}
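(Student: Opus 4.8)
The plan is to show that every scalar Gaussian function is itself a vector Gaussian function (since a scalar is a special case of a vector), establish that $\bar{\mathbb W}(\rgv)$ is a closed convex cone, and then invoke the minimality of $\con(\rgs)$ among closed convex cones containing $\rgs$. First I would observe the trivial inclusion $\rgs \subseteq \rgv$: if $g$ is s-Gaussian, it arises from scalar Gaussian variables $\{X_1,\ldots,X_n\}$, and scalar random variables are in particular (one-dimensional) random vectors, so the very same variables witness that $g$ is v-Gaussian. Hence $\rgs \subseteq \rgv \subseteq \bar{\mathbb W}(\rgv)$.

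The next step is to use Lemma \ref{lemma:3}, which gives $\bar{\mathbb W}(\rgv) = \con(\rgv)$. This immediately tells us that $\bar{\mathbb W}(\rgv)$ is a closed and convex cone, because $\con(\rgv)$ is by definition the smallest closed convex cone containing $\rgv$ and is therefore itself closed, convex, and a cone. So we have a closed convex cone $\bar{\mathbb W}(\rgv)$ that contains $\rgs$.

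Finally I would appeal to the defining property of $\con(\rgs)$ as the \emph{smallest} closed convex cone containing $\rgs$. Since $\bar{\mathbb W}(\rgv)$ is a closed convex cone and $\rgs \subseteq \bar{\mathbb W}(\rgv)$, minimality forces
\[
\con(\rgs) \subseteq \bar{\mathbb W}(\rgv),
\]
which is exactly \eqref{eq:secondout}. The argument is essentially a formal chaining of inclusions together with the cone-theoretic minimality principle.

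I do not anticipate a serious obstacle here; the theorem is a soft consequence of the definitions once Lemma \ref{lemma:3} is in hand. The only point requiring a moment's care is confirming that a scalar Gaussian variable genuinely qualifies as a Gaussian random vector under the paper's Definition of v-Gaussian functions, so that the base inclusion $\rgs \subseteq \rgv$ holds verbatim. Given that a length-one vector is an admissible Gaussian vector, this is immediate, and the rest is the standard ``smallest closed convex cone'' monotonicity. If one preferred to avoid citing Lemma \ref{lemma:3}, one could instead verify directly that $\bar{\mathbb W}(\rgv)$ is closed under addition and positive scaling and is closed as a set, but routing through Lemma \ref{lemma:3} is cleaner.
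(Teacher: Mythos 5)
Your proof is correct and is precisely the argument the paper intends: the paper's own proof is the one-line statement that the theorem is ``a direct consequence of $\rgs \subseteq \rgv$ and Lemma \ref{lemma:3},'' and your write-up simply spells out that chain (scalar Gaussians are length-one vector Gaussians, Lemma \ref{lemma:3} makes $\bar{\mathbb{W}}(\rgv)$ a closed convex cone, and minimality of $\con(\rgs)$ finishes it). No differences in approach worth noting.
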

\begin{proof}
A direct consequence of that $\rgs \subseteq \rgv$ and Lemma \ref{lemma:3}.
\end{proof}

So far,  we have established two outer bounds \eqref{eq:firstout} and \eqref{eq:secondout} for $\con(\rgs)$. 
In the following, we will prove that \eqref{eq:secondout} is in fact a tighter one.

\begin{definition}
A rank function $g$ is called \emph{v-entropic} 
if there exists a set of random vectors $\{X_{1}, \ldots, X_{n}  \}$, not necessarily of the same length, such that 
\[
g(\alpha) = h(X_\alpha).
\]
Also, let 
\begin{align}
\rv(\set{N}) &=  \{g \in {\mathbb R}^{2^n} : g \text{ is  \v-entropic}\}.
\end{align} 
\end{definition}

Clearly, ${\bar {\mathbb W}} (\rgv) = \con(\rgv) \subseteq  \con(\rv)$. Thus, 
\[
\con(\rgs) \subseteq {\bar {\mathbb W}} (\rgv)  \subseteq  \con(\rv).
\]
To show that  \eqref{eq:secondout} is tighter, it suffices to prove the following result. 
 

\begin{thm}\label{thm:main}
$\overline{\rv}  = \overline{\rs}  = \con({\bar\Gamma}_{n}^*, \phi^n_{1} , \ldots, \phi^n_{n})$.
\end{thm}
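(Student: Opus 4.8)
## Proof Proposal

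The goal is to establish the chain of equalities $\overline{\rv} = \overline{\rs} = \con({\bar\Gamma}_{n}^*, \phi^n_{1}, \ldots, \phi^n_{n})$. Since the rightmost equality $\overline{\rs} = \con({\bar\Gamma}^*_n, \phi^n_{1}, \ldots, \phi^n_{n})$ is already available from \eqref{5} (noting that $\overline{\rs} \subseteq \con(\rs)$, and one should check the closure already equals the convex cone here), the substantive work is to prove $\overline{\rv} = \overline{\rs}$. The plan is to establish this by two inclusions.

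\emph{The easy inclusion} $\overline{\rs} \subseteq \overline{\rv}$ follows almost immediately: every scalar continuous random variable is in particular a (length-one) random vector, so $\rs \subseteq \rv$, and taking closures preserves the inclusion. I would dispose of this direction in one or two lines.

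\emph{The hard inclusion} is $\overline{\rv} \subseteq \overline{\rs}$, which says that the differential entropy function of a collection of random \emph{vectors} can be approximated (after scaling, since we are working with closures of cones) by that of \emph{scalar} continuous random variables. First I would observe that $\rv \subseteq \Gamma^*_n$-type regions are related to the discrete entropy region: the key structural fact I expect to need is that the differential entropy vector of vector-valued random variables is governed by the same Shannon-type and non-Shannon-type inequalities as discrete entropy, modulo the ``translation'' functions $\phi^n_i$ that account for the arbitrary additive constants coming from continuous (differential) entropy being unbounded below. Concretely, the strategy is to show $\con(\rv) \subseteq \con({\bar\Gamma}^*_n, \phi^n_1, \ldots, \phi^n_n)$, which combined with \eqref{5} and the trivial direction $\rs \subseteq \rv$ pins down all three sets. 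To do this I would take a v-entropic $g$ coming from vectors $\{X_1, \ldots, X_n\}$ and decompose $h(X_\alpha)$ into a discrete-entropy-like part plus terms linear in the $\phi^n_i$; the standard tool is to quantize each $X_i$ at a fine scale $\Delta$ and use the classical relation $h(X_i) \approx H(\lfloor X_i/\Delta \rfloor) + \log \Delta$ (and its joint analogue), so that differential entropies converge, after subtracting the $\log\Delta$ terms captured exactly by the $\phi^n_i$, to a discrete entropic function in $\bar\Gamma^*_n$.

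\emph{The main obstacle} I anticipate is handling the vector case rather than the scalar case in this quantization/approximation argument: a random vector $X_i$ of length $m_i$ contributes $m_i \log \Delta$ rather than a single $\log\Delta$, and more delicately, one must argue that every v-entropic function lies in the closed convex cone generated by $\bar\Gamma^*_n$ and the $\phi^n_i$ even though the constituent vectors may have different, unbounded lengths. I expect the cleanest route is to invoke (or adapt) the result of \cite{Chan2003Balanced} cited at \eqref{5}, which already characterizes $\con(\rs)$ via the balanced-inequality correspondence between continuous and discrete entropy; the crux is then to verify that allowing vector-valued (rather than scalar) random variables does not enlarge the closed conic hull, i.e. that the extra degrees of freedom from higher-dimensional vectors are already captured by conic combinations of scalar-induced functions together with the translations $\phi^n_i$. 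Once this containment $\con(\rv) \subseteq \con({\bar\Gamma}^*_n, \phi^n_1, \ldots, \phi^n_n)$ is secured, the theorem closes by sandwiching all the sets between the common inner and outer descriptions.
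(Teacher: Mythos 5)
There is a genuine gap, and it sits exactly at the point you wave at in passing (``one should check the closure already equals the convex cone here''). The theorem is a statement about the \emph{closures} $\bar\rv$ and $\bar\rs$, but your sandwich strategy only controls \emph{closed convex cones}: from $\con(\rv) \subseteq \con({\bar\Gamma}^*_n , \phi^n_{1} , \ldots, \phi^n_{n})$, together with \eqref{5} and $\rs \subseteq \rv$, you can conclude $\con(\rv) = \con(\rs) = \con({\bar\Gamma}^*_n , \phi^n_{1} , \ldots, \phi^n_{n})$, but this says nothing about whether the a priori smaller sets $\bar\rv$ and $\bar\rs$ exhaust these cones; in general the closure of a set is strictly smaller than its closed conic hull, and the paper itself stresses this phenomenon (e.g.\ $\bar{\mathbb{W}}(\Psi_n)$ need not equal $\con(\Psi_n)$). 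Closing that hole is the real content of the theorem, and it is what the paper's Propositions~\ref{prop:claim} and~\ref{prop:cone} supply: $\bar\rv$ is \emph{itself} a closed convex cone. Closure under addition (Lemma~\ref{lemma:addition}) holds because independent copies can be stacked into longer vectors --- precisely the step that fails for scalar variables and is the reason $\rv$ is introduced at all --- while closure under scaling by an arbitrary $c>0$ requires a genuinely non-obvious construction: a binary switch $T$ with $\Pr\{T=1\}=c/j$ choosing between $j$ independent replicas of ${\bf X}$ and a uniform filler of zero differential entropy, followed by $j\to\infty$. Only once $\bar\rv$ is known to be a closed convex cone can one pass from $\bar\rv \supseteq \rs$ to $\bar\rv \supseteq \con(\rs) = \con({\bar\Gamma}^*_n , \phi^n_{1} , \ldots, \phi^n_{n})$ and finish; nothing you invoke (including \cite{Chan2003Balanced}) provides this.

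A second, smaller defect: your quantisation step lands in the wrong set for proving $\bar\rv \subseteq \bar\rs$. Writing $h(Z_\alpha)$ as the limit of $H([Z_i]_m, i\in\alpha) - \bigl(\sum_{i\in\alpha} k_i\bigr)\log m$ (R\'enyi, Proposition~\ref{pr:Renyi}) exhibits a v-entropic point as a limit of a discrete entropic function plus nonnegative multiples of the $\phi^n_i$, i.e.\ membership in $\con({\bar\Gamma}^*_n , \phi^n_{1} , \ldots, \phi^n_{n})$ --- which feeds your cone sandwich but not the closure statement. To land in $\bar\rs$ each approximant must be converted back into \emph{scalar continuous} random variables; this is the paper's Lemma~\ref{lemma:diss}, which smears the quantised discrete variables into scalars with piecewise-constant pdfs, so that $g^m(\alpha) = H([Z_i]_m, i\in\alpha) - \bigl(\sum_{i\in\alpha} k_i\bigr)\log m$ is \emph{exactly} s-entropic and hence $g^* \in \bar\rs$. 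Your R\'enyi-based mechanism is indeed the one the paper uses for this inclusion, but without Lemma~\ref{lemma:diss} and, above all, without the cone property of $\bar\rv$, the chain of equalities in the theorem does not close.
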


  Theorem~\ref{thm:main} basically states that replacing the
  real-valued random variables $X_i$ in the vector $\X$ by random
  vectors does not enlarge the closure of the space of differential entropy
  vectors.  The discrete counterpart of this result is trivial,
  because as far as the probability masses and the entropy are concerned, a
  discrete random vector can be replaced by a scalar discrete random
  variable.  However, in the continuous domain, it is not clear how a
  probability density function on $\reals^2$ or more generally
  $\reals^m$ can be mapped to a pdf on $\reals$ without changing the
  entropies.  In particular, there does not exist a continuous mapping
  from $\reals^2$ to $\reals$~\cite{Wiboonton2010Bijections}.

%

The proof of Theorem~\ref{thm:main} exploits the relationship between the differential entropy of a continuous vector and the entropy of a discrete vector obtained through quantisation.  Moreover, the entropy of the discrete random variable is equal to the differential entropy of a continuous random variable with piece-wise constant pdf.
Given the $n$-tuple $\Z$ whose entries are vectors, we ``quantise'' $\Z$ by a discrete vector and then construct a continuous vector with $n$ scalar entries whose entropy vector arbitrarily approximates that of $\Z$.
Before we prove the theorem, we  need several intermediate supporting results.

\def\bX{{\bf X}}

\begin{lemma}[Closeness in addition]\label{lemma:addition}
If $g_{1} $ and $g_{2}$ are \v-entropic (or entropic) functions over $\N$, then 
their sum $g_{1} + g_{2} $ is also \v-entropic (or entropic).
\end{lemma}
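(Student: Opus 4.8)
The plan is to prove closure under addition by taking independent copies of the two random systems and forming the natural product construction, so that joint entropies add. Suppose $g_1$ is the differential entropy function of a collection of random vectors $\{X_1,\ldots,X_n\}$ and $g_2$ is that of $\{X_1',\ldots,X_n'\}$, where the second system is chosen \emph{independent} of the first (we may always arrange this by working on a product probability space). For each $i\in\N$, define the stacked random vector $Y_i=(X_i,X_i')$, obtained by concatenating the entries of $X_i$ and $X_i'$ into a single longer vector. This $Y_i$ is again a random vector, so the resulting entropy function is \v-entropic; in the entropic (discrete) case the $Y_i$ are discrete and the argument is identical.

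The key computational step is to verify that the joint differential entropy of any sub-collection factors as a sum. For any $\alpha\subseteq\N$, the vector $Y_\alpha=(Y_i,i\in\alpha)$ is just the concatenation of $X_\alpha=(X_i,i\in\alpha)$ and $X_\alpha'=(X_i',i\in\alpha)$. Since the primed system is independent of the unprimed one, $X_\alpha$ and $X_\alpha'$ are independent, and for independent continuous random vectors the joint density factorises, giving
\[
h(Y_\alpha)=h(X_\alpha,X_\alpha')=h(X_\alpha)+h(X_\alpha')=g_1(\alpha)+g_2(\alpha).
\]
Hence the entropy function of $\{Y_1,\ldots,Y_n\}$ is exactly $g_1+g_2$, which shows $g_1+g_2$ is \v-entropic (respectively entropic). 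The independence of the two systems is what makes the cross terms vanish and is the only substantive ingredient.

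I do not expect a genuine obstacle here: the result is the standard ``tensoring independent copies'' trick, and the additivity of differential entropy under independence is a routine consequence of the density factorisation (the mutual information $I(X_\alpha;X_\alpha')=0$). The one point requiring a word of care is that I must first \emph{realise} the two given functions by independent systems; since each of $g_1,g_2$ is defined only up to the existence of \emph{some} realising random vectors, I am free to place them on a common product space so that the two families are independent without altering either marginal entropy function. Once that is fixed, the displayed identity holds for every $\alpha$ simultaneously and the lemma follows.
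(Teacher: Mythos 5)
Your proof is correct and is precisely the argument the paper has in mind: the paper dismisses this lemma with the words ``Direct verification,'' and the independent-realisation-plus-concatenation construction you give, with additivity $h(X_\alpha, X_\alpha') = h(X_\alpha) + h(X_\alpha')$ under independence, is exactly that verification spelled out. No gap; your note about being free to realise $g_1$ and $g_2$ on a common product space is the right point of care.
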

\begin{proof}
Direct verification.
\end{proof}

\begin{proposition}\label{prop:claim}
If $g^{*}\in\rv$, then for any $c>0$, 
$c \cdot g^{*} \in \bar\rv$. 
\end{proposition}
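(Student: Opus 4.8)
The plan is to show that $\bar\rv$ is closed under multiplication by every positive scalar; combined with the additive closure of Lemma~\ref{lemma:addition} this makes $\bar\rv$ a convex cone. The difficulty is that, unlike the discrete case, there is no elementary operation on a continuous random vector that multiplies its differential entropy by a factor $c\in(0,1)$. I would therefore route the scaling through the discrete world, where the closure ${\bar\Gamma}^*_n$ is already known to be a convex cone~\cite{R.W.Yeung1997A-framework}, and transfer the result back by quantisation.

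First I would record two transfer facts. (i) Every discrete entropic function is also a differential entropy function of \emph{scalar} continuous variables, with identical values: given discrete $\{U_i\}$, let $\tilde U_i$ be uniform on the unit cell indexed by the value of $U_i$; the joint pdf is then piece-wise constant on unit-volume cells, so $h(\tilde U_\alpha)=H(U_\alpha)$ for every $\alpha$. Hence every discrete entropic function, and by taking closure every element of ${\bar\Gamma}^*_n$, lies in $\bar\rv$. (ii) If $h$ is v-entropic via $\{W_i\}$, then rescaling each $W_i$ by a positive constant yields a v-entropic function that differs from $h$ by an element of $\mathrm{span}\{\phi^n_1,\dots,\phi^n_n\}$, and by a suitable choice of the constants this difference can be made equal to any prescribed element of that span (the vector analogue of Lemma~\ref{lemma:5}); by continuity the same shift maps $\bar\rv$ into $\bar\rv$.

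Next, given $g^{*}\in\rv$ realised by vectors $\{X_i\}$ of dimensions $\{d_i\}$, I would quantise with bin width $\Delta$ and use the standard relation $H([X_\alpha]_\Delta)=g^{*}(\alpha)-d_\alpha\log\Delta+\epsilon_\Delta(\alpha)$, where $d_\alpha=\sum_{i\in\alpha}d_i$ and $\epsilon_\Delta(\alpha)\to 0$ as $\Delta\to 0$. The function $\hat g_\Delta(\alpha)=H([X_\alpha]_\Delta)$ is discrete entropic, and the correction $\alpha\mapsto d_\alpha=-\sum_i d_i\,\phi^n_i(\alpha)$ is modular, hence lies in $\mathrm{span}\{\phi^n_i\}$. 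Applying the cone property in the discrete world gives $c\,\hat g_\Delta\in{\bar\Gamma}^*_n\subseteq\bar\rv$ by fact (i). Since $c\,\hat g_\Delta=c\,g^{*}-c(\log\Delta)\sum_i d_i\,\phi^n_i+c\,\epsilon_\Delta$, I would use fact (ii) to add the $\phi^n$-shift $+c(\log\Delta)\sum_i d_i\,\phi^n_i$ (realised by rescaling the scalar variables approximating $c\,\hat g_\Delta$), obtaining $c\,g^{*}+c\,\epsilon_\Delta\in\bar\rv$. Letting $\Delta\to 0$ and using that $\bar\rv$ is closed then yields $c\,g^{*}\in\bar\rv$.

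I expect the main obstacle to be the quantisation step: establishing the asymptotic identity for $H([X_\alpha]_\Delta)$ with an error term $\epsilon_\Delta$ that vanishes uniformly in $\alpha$ under suitable regularity of the densities, and checking that the correction is exactly modular so that it can be cancelled by the rescaling of fact (ii). The remaining ingredients — additive closure, the piece-wise constant realisation, and the scalar rescaling — are routine.
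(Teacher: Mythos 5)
Your proposal is correct, but it takes a genuinely different route from the paper's. The paper stays entirely inside the continuous domain and is self-contained: it forms $j$ independent replicas of the realising vector, mixes them (via a binary switch $T$ with $\Pr(T=1)=c/j$) with a zero-differential-entropy uniform vector, and sandwiches $h(Z_\alpha)$ between $h(Z_\alpha\,|\,T)=c\,h(X_\alpha)$ and $h(Z_\alpha\,|\,T)+h_b(c/j)$, so that letting $j\to\infty$ gives $h(Z_\alpha)\to c\,h(X_\alpha)$; nothing beyond elementary entropy bounds is needed. You instead detour through the discrete world: quantise, perform the scaling inside ${\bar\Gamma}^*_n$ using the known fact that it is a closed convex cone, and transfer back via piece-wise constant densities (your fact (i) is Lemma~\ref{lemma:diss} with all $c_i=1$) and coordinate rescaling (your fact (ii), the vector analogue of Lemma~\ref{lemma:5}). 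This is sound and non-circular, since none of your ingredients --- R\'enyi's theorem (Proposition~\ref{pr:Renyi}), the cone property of ${\bar\Gamma}^*_n$ from \cite{R.W.Yeung1997A-framework}, and the two transfer facts --- depends on the proposition being proved. But it front-loads exactly the quantisation machinery that the paper reserves for Theorem~\ref{thm:main}: your argument in effect establishes $\bar\rv\subseteq\con({\bar\Gamma}^*_n,\phi^n_1,\ldots,\phi^n_n)$, i.e., most of the main theorem, just to obtain the scaling property. What the paper's construction buys is economy and self-containedness: it needs neither R\'enyi's theorem (whose unconditional statement hides integrability caveats, accepted by the paper only later) nor the convex-cone property of ${\bar\Gamma}^*_n$, whose own standard proof in the discrete setting rests on the very mixture trick the paper deploys here. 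Two small repairs to your write-up: the uniformity in $\alpha$ of the error $\epsilon_\Delta$ that you flag as the main obstacle is automatic, since there are only $2^n$ subsets and one can take a maximum over them; and there is a harmless sign slip --- because $\phi^n_i(\alpha)=-1$ for $i\in\alpha$, the correct decomposition is $c\,\hat g_\Delta=c\,g^*+c(\log\Delta)\sum_i d_i\,\phi^n_i+c\,\epsilon_\Delta$, so the shift you must add is $-c(\log\Delta)\sum_i d_i\,\phi^n_i$ rather than its negative; this is immaterial because your fact (ii) realises arbitrary elements of the span.
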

\begin{proof}
Let  $\bX=(X_1,\dots,X_n)$  be a real-valued random vector with
a probability density function.  For any positive integer $j$, let $\bX^{(1)}, \dots, \bX^{(j)}$   be $j$ independent replicas of $\bX$ (by a replica we mean a random
  object with identical distribution). 
Similarly, let $\U=(U_1,\dots,U_n)$ be a real-valued random vector such that 
$U_1,\dots,U_n$ are mutually independent and each of them is uniformly distributed 
on the interval $[0,1]$. Again, for any positive integer $j$, let $\U^{(1)}, \dots, \U^{(j)}$
  be $j$ independent replicas of $\U$. It is easy to see that the joint density function of $\U^{(1)}, \dots, \U^{(j)}$ is uniform on a  hypercube with unit volume and hence has zero differential entropy.

Consider any $c>0$. Let 
$T$ be a binary random variable such that  
\[
\Probk{T=1}= c/j \text{ and } \Probk{T=0}= 1-c/j
\]
where $j$ is a positive integer.
Assume that $T$ is independent of 
 \[
( \bX^{(1)}, \U^{(1)} \ldots, \bX^{(j)} , \U^{(j)}).
 \]
Let $\Z = ( Z_{1}, \ldots, Z_{n})$ where each $Z_{i}$ is a random vector of length $j$ such that for 
any $i=1,\ldots, n$, 
 \begin{align}
  \label{eq:17}
   Z_i =
   \begin{cases}
     (U_i^{(1)}, \dots, U_i^{(j)} ) &
    \quad \text{if } T=0\\
     (X_i^{(1)}, \dots, X_i^{(j)}) &
     \quad \text{otherwise. }  
 \end{cases}
 \end{align}
 $\Z$ is evidently continuous with a  pdf, which is a mixture of two 
 pdfs induced by that of $\bX$ and $\U$. 
For any $\alpha \subseteq \mathcal N$, we can directly verify that 
\begin{align}
h(Z_\alpha | T=0) & = h(U_\alpha^{(1)}, \dots, U_\alpha^{(j)}  ) \\
& = 0
\end{align}
and
\begin{align}
h(Z_\alpha | T=1) & = h(X_\alpha^{(1)}, \dots, X_\alpha^{(j)}  ) \\
& = j h(X_\alpha).
\end{align}
Consequently, 
\begin{align}
h(Z_\alpha | T)  & = c h(X_\alpha).
\end{align}

Hence, 
\begin{align}
 c h(X_\alpha)  & =\lim_{j\to\infty} h(Z_\alpha | T ) \\
  & \le \lim_{j\to\infty} h(Z_\alpha) \\
 &\le  \lim_{j\to\infty} h(Z_\alpha | T ) + h_{b}(c/j) \\ 
&= c h(X_\alpha),
\end{align}
where 
$h_{b}(x) $ is the entropy of a binary random variable with probabilities $x$ and $1-x$.
Thus,  $\lim_{j\to\infty} h(Z_\alpha) = c h(X_\alpha)$.
Let $g^{j}$ and $g^{*}$ be respectively the entropy function induced by $\{Z_{1} , \ldots, Z_{n}  \}$ and $\{X_{1} , \ldots, X_{n}  \}$. Then
$g^{j}$ is $v$-entropic by definition and  
\[
\lim_{j\to\infty} g^{j} = c\cdot g^{*}.
\]
Hence, $c \cdot g^{*} \in \bar\rv$ for all $c>0$ and our proposition follows.
\end{proof}

\begin{proposition}\label{prop:cone}
$\overline{\rv}$ is a closed and convex cone.  
\end{proposition}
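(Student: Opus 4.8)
The plan is to show that $\overline{\rv}$ satisfies the three defining properties of a closed and convex cone: it is closed, it is closed under positive scalar multiplication, and it is closed under addition. Closedness is immediate since $\overline{\rv}$ is by definition the closure of a set, and the closure of any set is closed. So the real content is establishing the two algebraic closure properties and then combining them.

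First I would establish closure under positive scalar multiplication. This is essentially handed to us by Proposition~\ref{prop:claim}, which states that for any $g^{*}\in\rv$ and any $c>0$, we have $c\cdot g^{*}\in\overline{\rv}$. To upgrade this from elements of $\rv$ to elements of the closure $\overline{\rv}$, I would take an arbitrary $g\in\overline{\rv}$ and write it as a limit $g=\lim_{k\to\infty} g_{k}$ with each $g_{k}\in\rv$. For fixed $c>0$, Proposition~\ref{prop:claim} gives $c\cdot g_{k}\in\overline{\rv}$ for every $k$, and since scalar multiplication by $c$ is continuous, $c\cdot g_{k}\to c\cdot g$. As $\overline{\rv}$ is closed, the limit $c\cdot g$ also lies in $\overline{\rv}$. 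Hence $\overline{\rv}$ is closed under multiplication by any positive scalar.

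Next I would establish closure under addition. The key input is Lemma~\ref{lemma:addition}, which tells us that $\rv$ itself is closed under addition. To extend this to $\overline{\rv}$, I would take $g_{1}, g_{2}\in\overline{\rv}$ and approximating sequences $g_{1,k}, g_{2,k}\in\rv$ with $g_{1,k}\to g_{1}$ and $g_{2,k}\to g_{2}$. By Lemma~\ref{lemma:addition}, each sum $g_{1,k}+g_{2,k}$ lies in $\rv\subseteq\overline{\rv}$, and by continuity of addition $g_{1,k}+g_{2,k}\to g_{1}+g_{2}$. Again invoking closedness, the limit $g_{1}+g_{2}$ belongs to $\overline{\rv}$. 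Combining the additive closure with the positive-scalar closure just established yields convexity: for $g_{1},g_{2}\in\overline{\rv}$ and $\lambda\in(0,1)$, both $\lambda g_{1}$ and $(1-\lambda)g_{2}$ lie in $\overline{\rv}$ by scalar closure, and their sum lies in $\overline{\rv}$ by additive closure.

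Honestly, I do not expect any genuine obstacle here: the proposition is a formal consequence of the two preceding results (Lemma~\ref{lemma:addition} and Proposition~\ref{prop:claim}) combined with elementary topological facts about closures and the continuity of the vector-space operations. The one point requiring a small amount of care is the ``bootstrapping'' from the raw set $\rv$ to its closure $\overline{\rv}$ in both the scalar and additive arguments, where I must pass approximating sequences through the relevant operation and use that $\overline{\rv}$ is closed to capture the limit. This is routine, so I would expect the author's proof to be quite short, essentially citing Lemma~\ref{lemma:addition} and Proposition~\ref{prop:claim} and noting the closure is automatic.
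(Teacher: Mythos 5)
Your proposal is correct and follows essentially the same route as the paper: both use Proposition~\ref{prop:claim} to get closure under positive scaling (approximating sequence in $\rv$, scale termwise, use closedness of $\overline{\rv}$) and Lemma~\ref{lemma:addition} to get closure under addition via summed approximating sequences. The only cosmetic difference is that the paper merges your last two steps, proving closure under conic combinations $c_1 g_1^* + c_2 g_2^*$ directly rather than first proving additive closure and then assembling convexity from it.
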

\begin{IEEEproof}
For any $r\in \bar\rv$, by definition, there exists a sequence of $v$-entropic  functions $\{r^{i} \}_{i=1}^{\infty}$ such that 
\[
\lim_{i\to\infty} r^{i} = r.
\]
Thus,  for any $c>0$, 
\[
\lim_{i\to\infty} c \cdot r^{i} = c \cdot r.
\]
Then, by Proposition \ref{prop:claim}, $c \cdot r^{i} \in \bar\rv$ and consequently, $c \cdot r \in \bar\rv$.

Consider any $g_{1}^{*},g_{2}^{*} \in\bar\rv$, and $c_{1}, c_{2} > 0$. Since  
\[
c_{1} \cdot g_{1}^{*} \text{ and }  c_{2} \cdot g_{2}^{*} \in \bar\rv,
\]
  there exists sequences of $v$-entropic functions $\{r^{i}_{1}\}_{i=1}^{\infty}$ and $\{r^{i}_{2}\}_{i=1}^{\infty}$ such that 
\[
\lim_{i\to\infty} r^{i}_{\ell} = c_{\ell} \cdot g^{*}_{\ell}.
\]
By Lemma \ref{lemma:addition}, $r^{i}_{1} + r^{i}_{2}$ is also $v$-entropic. Thus, 
\[
c_{1} \cdot g_{1}^{*} +  c_{2} \cdot g_{2}^{*} \in \bar\rv.
\]
The proposition is proved.
\end{IEEEproof}

%

\begin{definition}[$m$-Quantization]
  Given $m>0$, let the $m$-quantization of any real number $x$ be denoted as:
 \begin{align}    \label{eq:Xm}
   [x]_m = \frac{ \lfloor m x \rfloor } m
\end{align}
where $\lfloor t \rfloor$ denotes the largest integer not exceeding $t$.
Similarly, let the $m$-quantization of a real vector
$\x=(x_1,\dots,x_n)$ be the element-wise $m$-quantization of the
vector, denoted by $[\x]_m$, i.e.,
\begin{align}
  \label{eq:13}
  [\x]_m = ( [x_1]_m, \dots, [x_n]_m )\ .
\end{align}
\end{definition}

Evidently, $[x]_m$ can only take values from the set 
\begin{align}\label{eq:supportXm}
\left\{0, \pm\frac1m, \pm\frac2m, \dots\right\}.
\end{align}
Hence for every real-valued random variable $X$, 
$[X]_m $ is a discrete random variable taking value in the set \eqref{eq:supportXm}. By definition,  
\begin{align}  \label{eq:4}
 \sum_{i\in\integers} \Probk{ [X]_m = \frac{i}m } = 1.
\end{align}

%
%
%
%

\begin{proposition}[Renyi \cite{Ren70}] \label{pr:Renyi}
If $X$ is a real-valued random vector of dimension $n$ with a
probability density function, then
\begin{align}    \label{eq:Renyi}
  \lim_{m\to\infty} H([X]_m) - n \log m = h(X) \ .
\end{align}
\end{proposition}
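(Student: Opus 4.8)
The plan is to identify $H([X]_m)-n\log m$ with the differential entropy of a piecewise-constant approximation of the density of $X$, and then to show that this approximation converges to the true density in the sense needed for the entropies to converge. Write $f$ for the density of the random vector $X$ on $\reals^n$, and for each multi-index $\mathbf i\in\integers^n$ let $C_{\mathbf i}$ be the half-open cube $\prod_{k=1}^n[\,i_k/m,(i_k+1)/m)$, which has volume $m^{-n}$ and on which $[X]_m$ is constantly equal to $\mathbf i/m$. First I would record that the probability mass is a cube integral, $p_{\mathbf i}=\Probk{[X]_m=\mathbf i/m}=\int_{C_{\mathbf i}}f=m^{-n}\bar{f}_{\mathbf i}$, where $\bar{f}_{\mathbf i}=m^{n}\int_{C_{\mathbf i}}f$ is the average of $f$ over $C_{\mathbf i}$. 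Substituting $p_{\mathbf i}=m^{-n}\bar{f}_{\mathbf i}$ into $H([X]_m)=-\sum_{\mathbf i}p_{\mathbf i}\log p_{\mathbf i}$ and using $\sum_{\mathbf i}p_{\mathbf i}=1$, the factors of $n\log m$ separate and yield the exact identity
\[
H([X]_m)-n\log m=-\sum_{\mathbf i}m^{-n}\bar{f}_{\mathbf i}\log\bar{f}_{\mathbf i}=-\int_{\reals^n}f_m\log f_m=h(f_m),
\]
where $f_m$ is the step density defined by $f_m(\mathbf x)=\bar{f}_{\mathbf i}$ for $\mathbf x\in C_{\mathbf i}$. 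Thus the claim reduces to proving $h(f_m)\to h(X)=-\int f\log f$.

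The second step exploits that, with respect to Lebesgue measure $\lambda$, the function $f_m$ is exactly the conditional expectation $E[f\mid\mathcal F_m]$ of $f$ given the $\sigma$-algebra $\mathcal F_m$ generated by the cubes of side $1/m$. Because $f_m$ is constant on each cube and shares its cube integrals with $f$, one has $\int f_m\log f_m=\int f\log f_m$, and therefore
\[
h(f_m)-h(X)=\int f\log f-\int f\log f_m=\int f\log\frac{f}{f_m}=D(f\,\|\,f_m)\ge 0,
\]
where $D(f\,\|\,f_m)$ is the relative entropy. In other words $H([X]_m)-n\log m=h(X)+D(f\,\|\,f_m)$, so the quantisation entropy always overestimates $h(X)$, and the whole theorem collapses to the single statement that $D(f\,\|\,f_m)\to 0$ as $m\to\infty$.

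For the final step I would first invoke the Lebesgue differentiation theorem: since $\bar{f}_{\mathbf i}$ is the average of the integrable $f$ over a cube shrinking to a point, $f_m\to f$ almost everywhere as $m\to\infty$, irrespective of any nesting of the partitions. The remaining, and principal, difficulty is to upgrade this pointwise convergence to $D(f\,\|\,f_m)\to 0$, equivalently $\int f\log f_m\to\int f\log f$, since the integrand is sign-indefinite and has no evident dominating function. I would organise this through the martingale structure: along a nested (say dyadic) refinement with $\mathcal F_m\uparrow$ generating the Borel $\sigma$-algebra, $f_m=E[f\mid\mathcal F_m]$ converges to $f$ in $L^1(\lambda)$ and a.e., and the relative entropies $D(f\,\|\,f_m)$ are nonnegative and monotonically nonincreasing (the decrement $D(f\,\|\,f_m)-D(f\,\|\,f_{m'})$ equals $D(f_{m'}\,\|\,f_m)\ge 0$ by the tower property); standard relative-entropy and uniform-integrability results then force the decreasing limit to be $0$. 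Extending from the dyadic subsequence back to the full integer sequence, and securing the requisite integrability, is exactly where the hypothesis that $X$ possesses a genuine density with well-defined (finite) differential entropy enters. I expect this measure-theoretic convergence of the relative entropy to be the crux of the argument, with everything preceding it being essentially bookkeeping.
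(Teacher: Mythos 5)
The paper itself does not prove this proposition: it cites R\'enyi, and explicitly flags that the Riemann-sum argument (Cover--Thomas) covers only Riemann-integrable densities, while the general case requires the Lebesgue convergence theorem together with truncation arguments. Your first two steps are correct and clean bookkeeping: the exact identity $H([X]_m)-n\log m=h(f_m)=h(X)+D(f\,\|\,f_m)$ (when all terms are finite) correctly reduces the proposition to the single claim $D(f\,\|\,f_m)\to 0$. The genuine gap is that this reduction is the easy half, and the step you delegate to ``standard relative-entropy and uniform-integrability results'' is the entire theorem --- precisely the part the paper warns is nontrivial. Concretely: (i) monotone decrease of $D(f\,\|\,f_m)$ along nested refinements plus nonnegativity yields only that the limit \emph{exists}, not that it is zero; the standard monotone-convergence facts for relative entropy along refining partitions (Gelfand--Yaglom--Perez) are proved for finite or probability reference measures, whereas your reference measure is Lebesgue measure on $\mathbb{R}^n$, which is infinite --- bridging that is exactly the truncation work the paper attributes to R\'enyi, and you never do it; (ii) the cube $\sigma$-algebras $\mathcal{F}_m$ are nested only when $m$ divides $m'$, so even granting a zero limit along the dyadic subsequence, nothing you wrote transfers it to the full sequence $m\to\infty$.

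Worse, the convergence you need is actually \emph{false} under the hypotheses you invoke, so the gap cannot be closed by sharper measure-theoretic bookkeeping. Take $n=1$ and $f=\mathbf{1}_A$ with $A=\bigcup_{k\ge 2}[k,k+c_k)$, where $c_k$ is proportional to $1/(k\log^2 k)$ and $\sum_k c_k=1$. Then $h(X)=0$ (finite), but $\lfloor X\rfloor$ is a deterministic function of $[X]_m$ for every positive integer $m$ (since $\lfloor\lfloor mx\rfloor/m\rfloor=\lfloor x\rfloor$), so $H([X]_m)\ge H(\lfloor X\rfloor)=\sum_k c_k\log(1/c_k)=\infty$ for all $m$; equivalently, by your own identity, $D(f\,\|\,f_m)=\infty$ for all $m$, and the left-hand side of \eqref{eq:Renyi} is identically $+\infty\ne h(X)$. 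Hence ``$X$ has a density with finite differential entropy'' does not secure the requisite integrability: R\'enyi's theorem carries the additional hypothesis $H([X]_1)<\infty$ (glossed over in the paper's statement of the proposition), and any complete proof must invoke it, through the truncation and dominated-convergence analysis that your sketch replaces with an appeal to unnamed standard results.
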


  Under the assumption that the pdf of a random variable $X$ is
  Riemann-integrable, Proposition~\ref{pr:Renyi} is established
  in~\cite{CovTho06} by treating $H([X]_m) - n \log m$ as the
  approximation of the Riemann integration of $-\int f_X(x) \log f_X(x)
  \diff x$.
  It is nontrivial to establish the result in general, where the pdf
  is not necessarily Rieman-integrable.  An example of such a pdf can be
  defined by using the Smith-Volterra-Cantor set.
  Nonetheless~\eqref{eq:Renyi} can be shown to hold using the
  Lebesgue convergence theorem along with some truncation
  arguments~\cite{Ren70}. 
 
\def\X{{\bf X}}

\begin{lemma}\label{lemma:diss}
Let $\{X_{1}, \ldots, X_{n} \}$ be a set of discrete random variables such that its entropy function is $g$.
For any positive numbers $c_{1}, \ldots, c_{n}$, let $g^{*}$ be defined as  
\[
g^{*}(\alpha) = g(\alpha) - \sum_{i\in\alpha} \log c_{i}.
\]
Then $g^{*}$ is \s-entropic.
\end{lemma}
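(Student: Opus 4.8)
The plan is to exhibit $g^{*}$ as the differential entropy function of an explicit continuous scalar random vector built from $\{X_{1},\ldots,X_{n}\}$ in two stages: first \emph{dither} the discrete atoms into a piecewise-constant density, and then rescale each coordinate to absorb the term $-\sum_{i\in\alpha}\log c_{i}$. Before dithering, I would reduce to the case where each $X_{i}$ takes values in the integers $\integers$. Each $X_{i}$ has countable support, so we may fix an injection $\phi_{i}$ from $\mathrm{supp}(X_{i})$ into $\integers$ and replace $X_{i}$ by $\phi_{i}(X_{i})$. Since each $\phi_{i}$ is injective, applying the maps coordinatewise is a bijection onto its range for every subset $\alpha\subseteq\N$, so all the joint entropies $H(X_{\alpha})=g(\alpha)$ are left unchanged. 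Hence there is no loss of generality in assuming $X_{i}\in\integers$ for all $i$.

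For the dithering step, let $U_{1},\ldots,U_{n}$ be mutually independent random variables, each uniform on $[0,1)$ and independent of $(X_{1},\ldots,X_{n})$, and set $Y_{i}=X_{i}+U_{i}$. Because the atoms of each $X_{i}$ now sit on distinct integers, the map $(X_{i},U_{i})\mapsto Y_{i}$ is invertible, and the joint density of any subvector $Y_{\alpha}=(Y_{i})_{i\in\alpha}$ is piecewise constant: on the unit cube $\prod_{i\in\alpha}[x_{i},x_{i}+1)$ indexed by an integer point $(x_{i})_{i\in\alpha}$ it takes the constant value $P(X_{\alpha}=x_{\alpha})$, and these cubes are disjoint. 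Integrating $-f\log f$ over each cube (which has unit volume) then yields
\[
h(Y_{\alpha}) = -\sum_{x_{\alpha}} P(X_{\alpha}=x_{\alpha})\log P(X_{\alpha}=x_{\alpha}) = H(X_{\alpha}) = g(\alpha)
\]
for every $\alpha\subseteq\N$. Thus $g$ itself is \s-entropic, realised by the continuous scalar vector $(Y_{1},\ldots,Y_{n})$; this is precisely the ``piece-wise constant pdf'' construction alluded to earlier.

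Finally I would rescale. Setting $Z_{i}=Y_{i}/c_{i}$ and invoking the coordinatewise scaling behaviour of differential entropy (Lemma~\ref{lemma:5}), we obtain $h(Z_{\alpha})=h(Y_{\alpha})-\sum_{i\in\alpha}\log c_{i}=g(\alpha)-\sum_{i\in\alpha}\log c_{i}=g^{*}(\alpha)$ for every $\alpha\subseteq\N$; since $(Z_{1},\ldots,Z_{n})$ are again continuous scalar random variables, this shows $g^{*}$ is \s-entropic. The only genuine obstacle is the dithering computation itself: one must check that separating the atoms onto the integer lattice really does make the marginal density of \emph{every} $Y_{\alpha}$ exactly piecewise constant with value $P(X_{\alpha}=x_{\alpha})$ (so that no two cubes overlap and the differential entropy collapses to the discrete entropy), and that this holds simultaneously for all subsets $\alpha$, including the case in which the support of $X$ is countably infinite, where one must verify that the sum defining $h(Y_{\alpha})=H(X_{\alpha})$ converges and is finite. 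The reduction to integer-valued atoms is what makes this clean, since the unit-width dither then produces disjoint cubes automatically.
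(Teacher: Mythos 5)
Your proposal is correct and is essentially the paper's own proof: the paper directly defines a continuous vector with joint density $\left(\prod_{i=1}^{n} c_{i}\right) p(\lfloor c_{1}y_{1}\rfloor, \ldots, \lfloor c_{n}y_{n}\rfloor)$, which is exactly the distribution of your $Z_{i}=(X_{i}+U_{i})/c_{i}$, so your dither-then-rescale argument merely factors the same construction (and the same piecewise-constant-density computation) into two explicit steps. One caveat worth noting: the scaling identity you use, $h(Z_{\alpha}) = h(Y_{\alpha}) - \sum_{i\in\alpha}\log c_{i}$ for $Z_{i}=Y_{i}/c_{i}$, is the mathematically correct one, but Lemma~\ref{lemma:5} as literally stated in the paper carries the opposite sign $+\sum_{i\in\alpha}\log c_{i}$ (a slip there, since it contradicts the pdf written in its own proof), so citing that lemma verbatim would flip your sign; your proof stands because you wrote the correct identity.
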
 
\begin{IEEEproof}   
As  $X_{i}$ is discrete, we may assume without loss of generality that the sample space of $X_{i}$ is the set of integers $\mathbb Z$. Let $p(x_{1}, \ldots, x_{n})$ be the probability mass function of  $\{X_{1}, \ldots, X_{n} \}$. Construct a set of continuous scalar random variables $\{Y_{1}, \ldots, Y_{n} \}$ whose probability density function is defined as follows:
\[
f_{Y_{1},\ldots,Y_{n}}(y_{1}, \ldots, y_{m}) \triangleq \left(\prod_{i=1}^{n} c_{i}\right) p(\lfloor c_{1}y_{1}\rfloor , \ldots, \lfloor c_{n}y_{n}\rfloor).
\] 
It can then be directly verified that 
\[
h(Y_\alpha) = H(X_\alpha)-  \sum_{i=1}^{n} \log c_{i}, \quad\forall \alpha \subseteq \set{N} .
\]
Consequently, $g^{*}$ is  \s-entropic.
\end{IEEEproof}

\begin{IEEEproof}[Proof of Theorem~\ref{thm:main}]
Clearly, $\bar\rs \subseteq \bar\rv$. We will now prove that $\bar\rv \subseteq \bar\rs$.
  Let $\Z=(Z_1,\dots,Z_n)$ consist of $n$ random vectors, where 
  \[
  Z_i = (Z_{i,1},\dots,Z_{i,k_i}).
  \]
    Let us define the $m$-quantization of
  $Z_{i}$, denoted as $[Z_{i}]_m$, be the element-wise $m$-quantization of
  $Z_{i}$, i.e., it consists of $[Z_{i,j}]_m$ for $j=1,\ldots, k_{i}$.  By Proposition~\ref{pr:Renyi},
  \begin{align}
    \label{eq:9}
    \lim_{m\to\infty} \left[ H([Z_{i}]_{m}, i\in\alpha) -  \left(\sum_{i \in \alpha } k_i \right)\log m \right] =  h(Z_\alpha) .
  \end{align}

Let $g^{*}, r^{m},g^{m} \in {\mathbb R}^{2^n}$ be such that 
\begin{align}
g^{*}(\alpha) & = h(Z_\alpha) \\
r^{m}(\alpha) & = H([Z_{i}]_{m}, i\in\alpha) \\
g^{m}(\alpha) &= r^{m}(\alpha) - \left(\sum_{i \in \alpha } k_i \right)\log m  .
\end{align}
By \eqref{eq:9}, $\lim_{m\to\infty} g^{m} = g^{*}$.
Also,  since $r^{m} \in \rd_{n}$, $g^{m}\in \rs$ by Lemma \ref{lemma:diss}.
Consequently, $g^{*}\in \bar\rs$. We have thus proved that $\rv \subseteq \bar\rs$ and as a result, $\bar\rv = \bar\rs$. 
Finally, by Proposition \ref{prop:cone}, $\bar\rv$ is a closed and convex cone and is equal to 
$\con(\rs)$.  Then by \eqref{5}, 
\begin{align}
\bar\rv    = \con({\bar\Gamma}^*_n , \phi^n_{1} , \ldots, \phi^n_{n}) . 
\end{align}
The theorem is proved.
\end{IEEEproof}

In Theorem \ref{thm:repisgaussian}, we have constructed an inner bound for $\con (\rgs)$ by using \s-representable functions. The same trick can also be used for constructing an inner bound for the set ${\bar {\mathbb W}} (\rgv)$.

\begin{definition}
A rank function $g$ over $\N$ is called \emph{v-representable} if for $i = 1,\ldots, n$, there exists a set of real-valued vectors (of the same length) $\{ A_{i,1} , \ldots  A_{i,k_{i}}\} $ such that for all $\alpha\subseteq \N$,
\[
g(\alpha ) = \dim { \langle  A_{i,j} , i\in\alpha, j=1,\ldots, k_{i}\rangle }.
\]
\end{definition}

The following theorem is a  counterpart of Theorem \ref{thm:repisgaussian}. The proving technique is the same as before. We will omit the proof for brevity. 

\begin{thm}[Inner bound on ${\bar {\mathbb W}} (\rgv)$]\label{thm:vrepisgaussian}
Suppose that  $g$ is \v-representable, then $g\in {\bar {\mathbb W}} (\rgv)$ . 
\end{thm}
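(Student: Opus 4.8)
The plan is to mirror the construction used in the proof of Theorem~\ref{thm:repisgaussian}, replacing each scalar Gaussian variable by a Gaussian \emph{vector} whose entries are assembled from the representing vectors. Suppose $g$ is \v-representable via vectors $\{A_{i,j} : i\in\N,\ j=1,\ldots,k_{i}\}$, all of common length $k$. I would take independent standard Gaussian variables $W_{1},\ldots,W_{k}$ together with variables $V_{i,j}$, one for each pair $(i,j)$, all mutually independent, and for a parameter $c>0$ define the scalar
\[
X_{i,j} = \frac{1}{\sqrt{c}} A_{i,j}[W_{1},\ldots,W_{k}]^{\top} + V_{i,j},
\]
so that $X_{i} = (X_{i,1},\ldots,X_{i,k_{i}})$ is a Gaussian random vector of length $k_{i}$.

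Next, for any $\alpha\subseteq\N$, I would stack the entries $\{X_{i,j} : i\in\alpha,\ j=1,\ldots,k_{i}\}$ into a single random vector. Exactly as in the scalar case, a direct computation gives its covariance matrix as $\frac{1}{c}A_{\alpha}A_{\alpha}^{\top} + \mathbf{I}$, where $A_{\alpha}$ is the matrix whose rows are the $A_{i,j}$ with $i\in\alpha$. The crucial observation---and the only place the \v-representable structure enters---is that $\mathrm{rank}(A_{\alpha}) = \dim\langle A_{i,j} : i\in\alpha,\ j\rangle = g(\alpha)$ by the very definition of a \v-representable function. Diagonalising $A_{\alpha}A_{\alpha}^{\top}$ and writing its nonzero eigenvalues as $d_{1},\ldots,d_{g(\alpha)}$, the determinant of the covariance factors as $\prod_{\ell=1}^{g(\alpha)}(d_{\ell}/c + 1)$.

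Then I would take the limit $c\to 0$ after normalising by $\frac12\log(1/c)$, precisely as in the earlier proof:
\[
\lim_{c\to 0} \frac{h(X_\alpha)}{\frac12\log(1/c)} = \lim_{c\to 0}\frac{\sum_{\ell=1}^{g(\alpha)}\log(d_{\ell}/c + 1)}{\log(1/c)} = g(\alpha).
\]
Since this holds simultaneously for every $\alpha\subseteq\N$, the normalised entropy vector of the Gaussian vectors $(X_{1},\ldots,X_{n})$ converges to $g$, which is exactly the statement that $g\in\bar{\mathbb{W}}(\rgv)$.

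I do not expect any substantial obstacle, since the computation is word-for-word the same as in the scalar setting once the vectors have been assembled; the entire content lies in recognising that concatenating the component vectors $A_{i,j}$ converts the vector-dimension function into the ordinary matrix rank of $A_{\alpha}$, which is precisely the quantity governing the growth rate of $\det(\cov)$ as $c\to 0$. The only minor bookkeeping point is to confirm that the cross-covariances between distinct components and distinct indices are handled correctly by the single shared bank of $W$'s and the independent $V$'s---but this is the identical algebra that already yielded $\cov = \frac1c A A^{\top} + \mathbf{I}$ before, now applied to the larger matrix $A_{\alpha}$. This is exactly why the paper is justified in asserting that the proving technique is the same and omitting the details.
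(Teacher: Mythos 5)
Your proposal is correct and is precisely the argument the paper has in mind: the paper omits the proof of Theorem~\ref{thm:vrepisgaussian}, stating only that the technique is the same as Theorem~\ref{thm:repisgaussian}, and your construction carries that technique out faithfully---the shared bank of $W$'s plus independent $V_{i,j}$'s gives covariance $\frac{1}{c}A_{\alpha}A_{\alpha}^{\top}+\mathbf{I}$, and the normalised limit $c\to 0$ recovers $\mathrm{rank}(A_{\alpha})=g(\alpha)$ for every $\alpha$, placing $g$ in $\bar{\mathbb{W}}(\Upsilon_{v,n})$.
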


Theorem \ref{thm:vrepisgaussian} is of great interest.  Characterising the set of \v-representable functions have been a very important problem in linear algebra and information theory. It is also extremely difficult. For many years, it is  only known that \v-representable functions are polymatroidal and satisfies the Ingleton inequalities~\cite{Guille2009The-minimal,ingleton71}. The set of representable functions is only known when $n \le 4$.
However, there were some recent breakthrough in this areas. In \cite{Dougherty2009Linear,Kinser2010New-inequalities}, many new subspace rank inequalities which are required to be satisfied by representable functions are discovered.  In particular, via a computer-assisted mechanical approach, the set of all representable functions when $n\le 5$ has been completely characterised. 
Interesting properties about the set of \v-representable functions were also obtained \cite{Chan2011Truncation}. Theorems \ref{thm:repisgaussian} and \ref{thm:vrepisgaussian} thus opens a new door to exploit results obtained about representable functions to characterise the set of Gaussian functions.

\begin{cor}[Inner bound on ${\bar {\mathbb W}} (\rgv)$]
\[
\con( \Omega_{v,n}, \phi^n_{1} , \ldots, \phi^n_{n} ) \subseteq {\bar {\mathbb W}} (\rgv)
\]
where $\Omega_{v,n}$ is the set of all \v-representable functions.
\end{cor}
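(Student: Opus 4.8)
The plan is to recognize this corollary as a purely formal assembly of results already proved, with a single point of genuine content. Since $\con(\Omega_{v,n},\phi^n_1,\ldots,\phi^n_n)$ is by definition the \emph{smallest} closed and convex cone containing the set $\Omega_{v,n}\cup\{\phi^n_1,\ldots,\phi^n_n\}$, it suffices to exhibit one closed and convex cone that contains all of these generators. I would show that $\bar{\mathbb W}(\rgv)$ is such a cone, and the containment then follows from minimality.

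First I would verify the two easy ingredients. That $\bar{\mathbb W}(\rgv)$ is a closed and convex cone is exactly Lemma~\ref{lemma:3}, which asserts $\con(\rgv)=\bar{\mathbb W}(\rgv)$, and $\con(\cdot)$ is a closed and convex cone by construction. That the \v-representable generators lie inside it is Theorem~\ref{thm:vrepisgaussian} verbatim, giving $\Omega_{v,n}\subseteq\bar{\mathbb W}(\rgv)$.

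The only substantive step is to place each $\phi^n_i$ inside $\bar{\mathbb W}(\rgv)$; note that this cannot come from the inner bound itself, since $\phi^n_i$ takes negative values and is therefore never \v-representable. I would obtain it by a scaling-and-limit argument. Start from any scalar Gaussian function $g\in\rgs$, for instance the modular function induced by $n$ independent standard Gaussians; as scalars are length-one vectors, $g$ is also \v-Gaussian. Fix an index $i$ and scale only the $i$-th variable by $c>0$. By Lemma~\ref{lemma:5} the resulting differential entropy function is $g^{*}_c=g-(\log c)\,\phi^n_i$, which is again \s-Gaussian and hence \v-Gaussian. Letting $c\to 0^{+}$ and dividing by $-\log c>0$ gives $\frac{1}{-\log c}\,g^{*}_c=\frac{1}{-\log c}\,g+\phi^n_i\to\phi^n_i$; since each $\frac{1}{-\log c}\,g^{*}_c\in\mathbb W(\rgv)$, the limit $\phi^n_i$ lies in $\bar{\mathbb W}(\rgv)$. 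Equivalently, one may simply combine Corollary~\ref{cor:3}, which places $\phi^n_i$ in $\con(\rgs)$, with the outer bound $\con(\rgs)\subseteq\bar{\mathbb W}(\rgv)$ of~\eqref{eq:secondout}, and sidestep the computation entirely.

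With all generators shown to lie in the closed and convex cone $\bar{\mathbb W}(\rgv)$, minimality of $\con(\Omega_{v,n},\phi^n_1,\ldots,\phi^n_n)$ yields the claimed inclusion. I do not expect a real obstacle here, since the content is already packaged in Theorem~\ref{thm:vrepisgaussian}, Lemma~\ref{lemma:3}, and Lemma~\ref{lemma:5}. The one place warranting care is the $\phi^n_i$ limit: Lemma~\ref{lemma:5} is stated for scalar variables, so I would either take $g$ to be scalar Gaussian (so the lemma applies literally and the embedding $\rgs\subseteq\rgv$ transfers the conclusion) or invoke Corollary~\ref{cor:3} to avoid the issue.
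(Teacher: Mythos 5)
Your proof is correct and is essentially the intended argument: the paper states this corollary without proof precisely because it is the same assembly used for Corollary~\ref{cor:3}, namely Theorem~\ref{thm:vrepisgaussian} for the generators $\Omega_{v,n}$, Lemma~\ref{lemma:3} to know $\bar{\mathbb W}(\rgv)$ is a closed convex cone, and the scaling argument of Lemma~\ref{lemma:5} (equivalently, Corollary~\ref{cor:3} chained with the outer bound \eqref{eq:secondout}) to capture the $\phi^n_i$. You also correctly flag the one non-obvious point — that $\phi^n_i$ cannot come from representability and must enter via the scaling limit — which is exactly the role Lemma~\ref{lemma:5} plays in the paper's parallel scalar argument.
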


\begin{remark}
While 
\[
\con( \Omega_{s,n}, \phi^n_{1} , \ldots, \phi^n_{n} ) \subseteq \con (\rgs),
\]
it is still an open question whether 
\[
\con( \Omega_{v,n}, \phi^n_{1} , \ldots, \phi^n_{n} ) \subseteq \con (\rgs)
\]
or not.
\end{remark}

We will end this section with a discussion of a related concept in a recent work~\cite{Hassibi2008The-entropy}. 
Gaussian rank functions were   studied in \cite{Hassibi2008The-entropy}. However, their definitions are slightly different from ours.
\begin{definition}[Normalised joint entropy~\cite{Hassibi2008The-entropy}]\label{df:normalised}
Let $\{X_{1} , \ldots, X_{n}\}$ be a set of $n$ jointly distributed vector valued Gaussian random variables such that each vector $X_{i}$ is a vector of length $T$. Its \emph{normalised Gaussian entropy function} $g$ is a function in ${\mathbb R}^{2^n}$ such that
\[
g(\alpha) \triangleq \frac{1}{T} h(X_\alpha).
\]

\end{definition}

The only difference between Definitions \ref{df:Gaussian} and \ref{df:normalised} is the scalar multiplier $1/T$. Hence, a normalised Gaussian entropy function must be contained in the set $\mathbb W (\rgv)$. In one sense, our proposed definition is slightly more general as we do not require all the random vectors $X_{i}$ to have the same length. On the other hand, the ``normalising factor'' $1/T$ in Definition \ref{df:normalised} can lead to some interesting results. For example, while we cannot prove that the closure of $  \mathbb W (\rgs) $  is  closed and convex, \cite{Hassibi2008The-entropy} proved that the closure of the set of all normalised Gaussian entropy functions is indeed  closed and convex.
 
\begin{proposition}
Let $\Upsilon^*_{N,n}$\footnote{The subscript $N$ is a mnemonic for the word ``normalised''.} be the set of all normalised Gaussian entropy functions. Then 
\[
\con(\Upsilon^*_{N,n}) = \con(\rgv).
\]
\end{proposition}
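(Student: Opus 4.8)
The plan is to establish the two inclusions separately. The forward inclusion $\con(\Upsilon^*_{N,n}) \subseteq \con(\rgv)$ is immediate from the discussion preceding the statement: a normalised Gaussian entropy function $g_N$ arises from Gaussian vectors $X_1,\ldots,X_n$ of a common length $T$ with $g_N(\alpha)=\frac{1}{T}h(X_\alpha)$, so $Tg_N$ is the differential entropy function of these (equal-length, hence in particular arbitrary-length) vectors, giving $Tg_N\in\rgv$ and therefore $g_N=\frac{1}{T}(Tg_N)\in\mathbb{W}(\rgv)\subseteq\con(\rgv)$. Since $\con(\rgv)$ is a closed convex cone containing $\Upsilon^*_{N,n}$, it contains $\con(\Upsilon^*_{N,n})$.

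The substance is the reverse inclusion, for which I would show $\rgv\subseteq\con(\Upsilon^*_{N,n})$ and then pass to cones. The key device I intend to use is a length-equalising padding that does not disturb any joint entropy. Given $g\in\rgv$ realised by Gaussian vectors $X_i$ of length $k_i$, I would set $T=\max_i k_i$ and append to each $X_i$ exactly $T-k_i$ fresh scalar Gaussian coordinates, all mutually independent, independent of $X$, and each of variance $\tfrac{1}{2\pi e}$; such a coordinate has differential entropy $\tfrac12\log\!\big(2\pi e\cdot\tfrac{1}{2\pi e}\big)=0$. Writing $\tilde X_i$ for the padded vector, independence together with the vanishing entropy of the appended coordinates gives $h(\tilde X_\alpha)=h(X_\alpha)$ for every $\alpha\subseteq\N$, while every $\tilde X_i$ now has the common length $T$. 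Hence $\hat g_N(\alpha):=\frac{1}{T}h(\tilde X_\alpha)=\frac1T g(\alpha)$ is a bona fide normalised Gaussian entropy function, so $g=T\hat g_N\in\mathbb{W}(\Upsilon^*_{N,n})\subseteq\con(\Upsilon^*_{N,n})$. Taking cones then yields $\con(\rgv)\subseteq\con(\Upsilon^*_{N,n})$, and combining with the forward inclusion gives the equality.

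The only genuinely load-bearing step is the choice of padding, and I expect it to be the crux rather than an obstacle: by matching the padding variance to $\tfrac{1}{2\pi e}$ the appended coordinates contribute zero differential entropy, so lengths are equalised without any correction term. I would also note the alternative of padding with unit-variance coordinates, which instead shifts $g$ by a \emph{nonnegative} combination of the functions $\phi^n_{i}$; these lie in $\con(\rgs)\subseteq\con(\Upsilon^*_{N,n})$ (using $\rgs\subseteq\Upsilon^*_{N,n}$, i.e.\ the case $T=1$, together with Corollary~\ref{cor:3}), so that route closes as well, but the zero-entropy padding makes the $\phi^n_{i}$ detour unnecessary.
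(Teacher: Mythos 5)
Your proposal is correct and takes essentially the same route as the paper: pad the shorter Gaussian vectors with independent Gaussian coordinates up to the common length $T=\max_i k_i$, observe that the joint entropies are (effectively) unchanged, and rescale by $1/T$ to land in $\Upsilon^*_{N,n}$. If anything, your choice of padding variance $\tfrac{1}{2\pi e}$ (so each appended coordinate has zero differential entropy) is cleaner than the paper's unit-variance padding, whose claim $\ell_{1} g^{*}=g$ silently ignores a $\tfrac12\log(2\pi e)$ offset per padded coordinate --- an offset that would otherwise have to be absorbed via the $\phi^n_{i}$ detour you mention.
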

\begin{proof}
It can be directly verified from definitions that $\con(\Upsilon^*_{N,n}) \subseteq \con(\rgv)$.
Now, consider any $g\in \rgv$. Then by definition, there exists 
$n$   Gaussian random vectors  $\{ X_{1} , \ldots, X_{n} \}$ 
such that 
\begin{align}
g(\alpha) = h(X_\alpha)
\end{align}
for all $\alpha \subseteq \N$. 
Let $\ell_{i}$ be the length of the random vector $X_{i}$. Assume without loss of generality that 
$\ell_{1} \ge \ell_{i}$ for all $i$.

Let $k= \sum_{i=1}^{n} (\ell_{1} - \ell_{i})$ and  $Y_{1} , \ldots, Y_{k}$ be a set of scalar Gaussian random variables with identity covariance matrix and independent of $\{ X_{1} , \ldots, X_{n} \}$. 
For each $i=1, \ldots, n$, let $r_{i} = \sum_{j=1}^{i} (\ell_{1} - \ell_{i})$ and 
\[
Z_{i} = 
\begin{cases}
X_{i} & \text{ if } \ell_{i} =  \ell_{1} \\
(X_{i},  Y_{r_{i}+1} , \ldots, Y_{r_{i+1}   }) & \text{ otherwise. }
\end{cases}
\]
Clearly, each $Z_{i}$ is a Gaussian vector with the same length $\ell_{1}$. 
Let $g^{*}$ be the normalised entropy function induced by $\{Z_{1},\ldots, Z_{n}\}$. 
It is easy to verify that $\ell_{1} g^{*} = g$. Consequently, $\rgv \subseteq \con(\Upsilon^*_{N,n})$ and the proposition thus follows. 
 \end{proof}

\begin{remark}
Our Proposition \ref{prop:Nis3} can also be derived from \cite[Theorem 5]{Hassibi2008The-entropy}, which proved that for any $g\in \rgv$ when $n =3$, there exists a $\theta^{*} > 0$ such that for all $\theta \ge \theta^{*}$,  $\frac{1}{\theta} g$ is vector Gaussian. However, their proof techniques are completely different.
\end{remark}

\section{Conclusion}
In this paper, we took an information theoretic approach to study determinant inequalities
for positive definite matrices. We showed that characterising all such inequalities for an $n\times n$ positive definite matrix is equivalent to characterising the set of all scalar Gaussian entropy functions for $n$ random variables. While a complete and explicit characterisation of the set is still missing, we obtained  inner and outer bounds respectively by means of linearly representable functions and vector Gaussian entropy functions.  

It turns out that for $n \le 3$, the set of all scalar Gaussian entropy functions is the same as the set of all differential entropy functions.  The latter set is completely characterized by Shannon-type information inequalities.  Consequently, the aforementioned inner and outer bounds agree with each other.  For $n \ge 4$, we showed the contrary, and the problem is seeming very difficult.

%
 

\begin{thebibliography}{10}
\providecommand{\url}[1]{#1}
\csname url@samestyle\endcsname
\providecommand{\newblock}{\relax}
\providecommand{\bibinfo}[2]{#2}
\providecommand{\BIBentrySTDinterwordspacing}{\spaceskip=0pt\relax}
\providecommand{\BIBentryALTinterwordstretchfactor}{4}
\providecommand{\BIBentryALTinterwordspacing}{\spaceskip=\fontdimen2\font plus
\BIBentryALTinterwordstretchfactor\fontdimen3\font minus
  \fontdimen4\font\relax}
\providecommand{\BIBforeignlanguage}[2]{{%
\expandafter\ifx\csname l@#1\endcsname\relax
\typeout{** WARNING: IEEEtran.bst: No hyphenation pattern has been}%
\typeout{** loaded for the language `#1'. Using the pattern for}%
\typeout{** the default language instead.}%
\else
\language=\csname l@#1\endcsname
\fi
#2}}
\providecommand{\BIBdecl}{\relax}
\BIBdecl


\bibitem{CoverBook} T. Cover and J. Thomas, Elements of information theory,  Wiley-Interscience, New York, NY, USA, 1991. ISBN 0-471-06259-6. 

\bibitem{Cover88}
T. Cover,  ``Determinant inequalities via information theory,''  \emph{SIAM. J. Matrix Anal. \& Appl.}, 9(3), pp.384-392, 1988.  

\bibitem{R.W.Yeung1997A-framework}
R.~W. Yeung, ``A framework for linear information inequalities,'' \emph{IEEE
  Trans. Inform. Theory}, vol.~43, no.~6, pp. 1924--1934, Nov 1997.
  
\bibitem{Yeung08}
R.~W.~Yeung, {\em Information Theory and Network Coding}, Springer 2008.

\bibitem{Han78}
T. S. Han, ``Nonnegative entropy measures of multivariate symmetric
correlations,''  \emph{Inform. Contr.}, 36: 133-156, 1978.

\bibitem{Chan2003Balanced}
T.~H. Chan, ``Balanced information inequalities,'' \emph{IEEE Trans. Inform.
  Theory}, vol.~49, pp. 3261 -- 3267, 2003.

\bibitem{Ren70}
A.~R\'enyi, \emph{Probability Theory}.\hskip 1em plus 0.5em minus 0.4em\relax
  Budapest, Hungary: North Holland -- Academiai Kiado, 1970.

\bibitem{CovTho06}
T.~M. Cover and J.~A. Thomas, \emph{Elements of Information Theory},
  2nd~ed.\hskip 1em plus 0.5em minus 0.4em\relax Wiley, 2006.

\bibitem{Wiboonton2010Bijections}
K.~Wiboonton, ``Bijections from $\mathbb R^n$ to $\mathbb R^m$,'' available online at 
  \emph{https://www.math.lsu.edu/~kwiboo1/talkpaper.pdf}, 2010.

\bibitem{Hassibi2008The-entropy}
B.~Hassibi and S.~Shadbakht, ``The entropy region for three gaussian random
  variables,'' in \emph{Information Theory, 2008. ISIT 2008. IEEE International
  Symposium on}, july 2008, pp. 2634 --2638.

\bibitem{Guille2009The-minimal}
L.~Guille, T.~H. Chan, and A.~Grant, ``The minimal set of ingleton
  inequalities,'' \emph{accepted for publications in IEEE Trans. on Inform.
  Theory}, 2011.

\bibitem{ingleton71}
A.~W. Ingleton, ``Representation of matroids.''\hskip 1em plus 0.5em minus
  0.4em\relax London: Academic Press, 1971, pp. 149--167.

\bibitem{Dougherty2009Linear}
R.~Dougherty, C.~Freiling, and K.~Zeger, ``Linear rank inequalities on five or
  more variables,'' \emph{Arxiv preprint cs.IT/0910.0284v3}, 2009.

\bibitem{Kinser2010New-inequalities}
R.~Kinser, ``New inequalities for subspace arrangements,'' \emph{J. Combin.
  Theory Ser. A}, 2010.

\bibitem{Chan2011Truncation}
T.~Chan, A.~Grant, and D.~Pfl{\"u}ger, ``Truncation technique for
  characterising linear polymatroids,'' \emph{accepted for publications in IEEE
  Trans. on Inform. Theory}, 2011.

\bibitem{Zhang1998On-the-characterization}
Z.~Zhang and R.~W. Yeung, ``On the characterization of entropy function via
  information inequalities,'' \emph{IEEE Trans. Inform. Theory}, vol.~44, pp.
  pp. 1440--1452, 1998.

\end{thebibliography}


\end{document}